\documentclass[submission,copyright,creativecommons]{eptcs}
 % Name of the event you are submitting to

\usepackage{iftex}
\usepackage{amsthm}
\newtheorem{theorem}{Theorem}
\newtheorem{definition}{Definition}
\newtheorem{example}{Example}
\newtheorem{lemma}{Lemma}
\newtheorem{remark}{Remark}
\newtheorem{corollary}{Corollary}

\usepackage{latexsym,relsize}
\usepackage{amsmath,amsfonts,amssymb}
\usepackage{mathtools}
\usepackage{algorithm}
\usepackage{algpseudocode}
\usepackage{setspace}
\usepackage{multicol}
\usepackage{longtable}
\sloppy

\usepackage{color}
\usepackage{bussproofs}
\EnableBpAbbreviations

\usepackage{tikz}
\usepackage{colonequals}

\usepackage{multirow}

\usepackage{multicol}

\usepackage{array}
\newcolumntype{C}[1]{>{\centering\arraybackslash}p{#1}}
\newcolumntype{L}[1]{>{\arraybackslash}p{#1}}

\def\mc{\multicolumn}

\newcommand{\fns}{\footnotesize}

\newcommand{\val}[1]{[\![{#1}]\!]}
\newcommand{\descr}[1]{(\![{#1}]\!)}

%%% MASSA macros

%Unary Modalities
%Operational connectives

%\newcommand{\wcirc}{\ensuremath{{\circ}}\xspace}
%Structural connectives

%Distributive Lattice
%Operational Connectives

\newcommand{\Diamondblack}{\blacklozenge}

\newcommand{\fakeparagraph}[1]{

\textit{#1} \ \ }

\newcommand{\ceq}{\colonequals}

\ifpdf
  \usepackage{underscore}         % Only needed if you use pdflatex.
  \usepackage[T1]{fontenc}        % Recommended with pdflatex
\else
  \usepackage{breakurl}           % Not needed if you use pdflatex only.
\fi

\renewcommand{\phi}{\varphi}

\title{Fuzzy Lattice-based Description Logic}
\author{Yiwen Ding\qquad\qquad Krishna Manoorkar
%\institute{Amsterdam, Netherlands}
\institute{School of Business and Economics\\
Vrije Universiteit Amsterdam\thanks{This project has received funding from the European Union’s Horizon 2020 research and innovation programme under the Marie Skłodowska-Curie grant agreement No 101007627. The research of Krishna Manoorkar is supported by the NWO grant KIVI.2019.001. Yiwen Ding is supported by the China Scholarship Council.}\\
Amsterdam, Netherlands}
\email{dyiwen666@gmail.com \quad\qquad krishna.manoorkar@gmail.com}
}

\begin{document}
\maketitle

\begin{abstract}
Recently, description logic $\mathrm{LE}$-$\mathcal{ALC}$ was introduced for reasoning in the semantic environment of enriched formal contexts, and a polynomial-time tableaux algorithm was developed to check the consistency of knowledge bases with acyclic TBoxes \cite{van2023non}. In this work, we introduce a fuzzy generalization of $\mathrm{LE}$-$\mathcal{ALC}$ called  $\mathrm{LE}$-$\mathcal{FALC}$ which provides description logic counterpart of many-valued normal non-distributive logic a.k.a.~many-valued $\mathrm{LE}$-logic. This description logic can be used to represent and reason about knowledge in formal framework  of fuzzy formal contexts and fuzzy formal concepts as introduced in \cite{Belohlavek1999-BLOFGC}. We provide a tableaux algorithm that provides a complete and sound polynomial-time decision procedure to check the consistency of $\mathrm{LE}$-$\mathcal{FALC}$ ABoxes. As a result, we also obtain an exponential-time decision procedure for checking the consistency of $\mathrm{LE}$-$\mathcal{FALC}$ with acyclic TBoxes by unraveling. 
\end{abstract}

\section{Introduction}

Description Logic (DL) \cite{DLhandbook} is a class of logical formalisms, typically based on the classical first-order logic, widely used in Knowledge Representation and Reasoning to describe and reason about relevant concepts and their relationships in a given application domain.

 {\em Normal non-distributive modal logic} a.k.a.~$\mathrm{LE}$-logic has been studied as a logic of categorization endowed with modal operators based on  the fact that non-distributive modal logic is sound and complete  w.r.t.~its semantics based on  {\em enriched formal contexts} (i.e.~, relational structures based on formal contexts from Formal Concept Analysis (FCA)) \cite{conradie2016categories,conradie2017toward}. An enriched formal context dually corresponds to the concept lattice of its underlying formal context expanded with normal modal operators defined by those enriching relations. Similarly to the classical modal logic, these normal modal operators have many different intuitive interpretations, such as epistemic interpretation \cite{conradie2017toward}, approximation interpretation \cite{conradie2021rough}, etc.

In \cite{van2023non}, the two-sorted non-distributive description logic $\mathrm{LE}$-$\mathcal{ALC}$\footnote{Even though concept names in $\mathrm{LE}$-$\mathcal{ALC}$ do  not contain negation, we still refer to this description logic as $\mathrm{LE}$-$\mathcal{ALC}$ rather than $\mathrm{LE}$-$\mathcal{ALE}$, as negation on ABox terms is included in the description logic language.} was developed, based on  $\mathrm{LE}$-logics and their semantics based on enriched formal contexts. $\mathrm{LE}$-$\mathcal{ALC}$ provides a natural means of reasoning about the formal concepts (or categories) arising from formal contexts in FCA \cite{ganter1997applied,ganter2012formal} enriched with modal operators. $\mathrm{LE}$-$\mathcal{ALC}$ has the same  relationship with non-distributive modal logic and its semantics based on formal contexts, as $\mathcal{ALC}$ with the classical normal modal logic and its Kripke frames semantics. Namely, $\mathrm{LE}$-$\mathcal{ALC}$  facilitates the description of {\em enriched formal contexts}, and {\em formal concepts} generated by them. As enriched formal contexts dually correspond to complete lattices equipped with normal modal operators, $\mathrm{LE}$-$\mathcal{ALC}$ is also a natural framework for representing and reasoning about general (possibly non-distributive) complete lattices equipped with  normal modal operators.

For many real-life applications, the concepts under consideration are {\em imprecise} or {\em fuzzy}. For example, if we want to categorize movies, the concepts involved such as ``action movies", ``dramas", etc.~are imprecise. To model such scenarios, different fuzzy generalizations of FCA have been studied extensively \cite{Belohlavek1999-BLOFGC,fuzzyandroughFCA,poelmans2013formal}. 
In \cite{conradie2019logic} the many-valued $\mathrm{LE}$-logic was described as the logic of vague categorization. The semantics of this logic is given by many-valued formal contexts enriched with relations defining modal operators.  The propositional part of this logic corresponds to the logic of fuzzy concepts defined in \cite{Belohlavek1999-BLOFGC}. The modal operators can be given different interpretations, such as epistemic interpretation or approximation interpretation as in the crisp case \cite{conradie2019logic}.  In this work, we generalize the description logic $\mathrm{LE}$-$\mathcal{ALC}$ to the fuzzy description logic $\mathrm{LE}$-$\mathcal{FALC}$. This logic has the same relationship with the many-valued $\mathrm{LE}$-logic as $\mathrm{LE}$-$\mathcal{ALC}$ with (crisp) $\mathrm{LE}$-logic. $\mathrm{LE}$-$\mathcal{FALC}$ facilitates the description of many-valued enriched formal contexts, which give rise to fuzzy concept lattices extended with normal modal operators. Hence, $\mathrm{LE}$-$\mathcal{FALC}$ is also a natural framework to represent and reason about general fuzzy FCA expanded with modal operators.

In \cite{van2023non}, a polynomial-time tableaux algorithm   for checking the  consistency of $\mathrm{LE}$-$\mathcal{ALC}$ ABoxes was developed. The polynomial-time upper bound is based on the following two features of $\mathrm{LE}$-$\mathcal{ALC}$: (1) In the semantics of  $\mathrm{LE}$-$\mathcal{ALC}$ disjunction is interpreted in terms of the {\em intersection} of intensions. Therefore, $\vee$ does not produce branching for the same reason that $\wedge$ does not in the classical setting.  (2) Unlike the classical logic, the $\Diamond$ operator in $\mathrm{LE}$-$\mathcal{ALC}$ is interpreted using a universal quantifier (instead of an existential quantifier), which allows us to bound the number of new constants appearing in the tableaux expansion. 

In this work, we generalize the tableaux expansion rules for $\mathrm{LE}$-$\mathcal{ALC}$ to the fuzzy setting to define tableaux expansion rules for $\mathrm{LE}$-$\mathcal{FALC}$. We show that the resulting tableaux algorithm provides a sound and complete polynomial-time decision procedure for checking the consistency of $\mathrm{LE}$-$\mathcal{FALC}$ knowledge bases with acyclic TBoxes. Since many description logic reasoning tasks can be equivalently represented as a problem of  checking the consistency of knowledge bases (cf.~\cite{DLhandbook}), this tableaux algorithm  provides us with an efficient methodology to perform these tasks for $\mathrm{LE}$-$\mathcal{FALC}$ in polynomial time. 

\paragraph{Structure of the paper.} In Section \ref{sec:Preliminaries}, we present the required preliminaries used in this paper. In Section \ref{sec:LE-FALC}, we introduce the description logic $\mathrm{LE}$-$\mathcal{FALC}$. In Section \ref{Sec: tableau}, we introduce the Tableaux algorithm for checking the consistency of $\mathrm{LE}$-$\mathcal{FALC}$ ABoxes, which can also be easily extended to $\mathrm{LE}$-$\mathcal{FALC}$ acyclic TBoxes. In Section \ref{sec:soundness}, we show the soundness of this tableaux algorithm. In Section \ref{sec:Completeness}, we show the completeness of this tableaux algorithm. In Section \ref{sec:example of knowledge bases}, we give some examples of $\mathrm{LE}$-$\mathcal{FALC}$ ABoxes and use our tableaux algorithm to check their consistency.

\section{Preliminaries}\label{sec:Preliminaries}

In this section, we gather some useful facts about the  many-valued $\mathrm{LE}$-logic and its many-valued polarity-based semantics \cite{conradie2019logic}, and description logic $\mathrm{LE}$-$\mathcal{ALC}$ \cite{van2023non}. We assume that the readers are familiar with the basic concepts from description logic, which we refer to \cite{baader2017introduction}. For more details on $\mathrm{LE}$-logic and its polarity-based semantics, we refer to \cite{conradie2017toward}, \cite{conradie2021rough},  and \cite{conradie2020non}.

\subsection{Many-valued polarity-based semantics}\label{subsec:Many-valued polarity-based semantics}

%In this section, wewe recall important notions on many-valued $\mathrm{LE}$-logic and its many-valued polarity-based semantics from \cite{conradie2019logic}. 

Let $\mathrm{Prop}$ be a countable set of propositional variables. The language of many-valued $\mathrm{LE}$-logic $\mathcal{L}$  is defined as follows:

{{\centering
  $\varphi ::= p \mid  \bot \mid \top \mid \varphi \wedge \varphi \mid \varphi \vee \varphi \mid \Box \varphi \mid \Diamond \varphi$,  
\par}}
\noindent
where $p\in \mathrm{Prop}$, and $\Box\in\mathcal{G}$ and $\Diamond\in\mathcal{F}$ for finite sets $\mathcal{G}$ and $\mathcal{F}$ of unary $\Box$-type (resp.~$\Diamond$-type) modal operators. %Given any $\varphi,\psi\in\mathcal{L}$, $\varphi\vdash\psi$ is called an {\em $\mathcal{L}$-sequent}. 

In this paper, we let $\mathbf{H} = (H, \vee, \wedge, \rightarrow, 1, 0)$ be a complete and completely distributive Heyting algebra.\footnote{{We chose Heyting algebra here in order to maintain the simplicity and readability of the article. In fact, the results of this paper can be generalized to any complete frame-distributive and dually frame-distributive, commutative, and associative residuated lattice.(cf.~\cite{conradie2019logic})}}For any $\alpha,\beta$ in $\mathbf{H}$, let $\alpha\leftrightarrow\beta:=\alpha\rightarrow\beta\wedge\beta\rightarrow\alpha$. For any non-empty set $W$, an {\em $\mathbf{H}$-subset} of $W$ is a map $u: W\to \mathbf{H}$. We let $\mathbf{H}^W$ denote the set of all $\mathbf{H}$-subsets. Clearly, $\mathbf{H}^W$ induces a complete and completely distributive Heyting algebra by defining the operations pointwise. %The $\mathbf{H}$-{\em subsethood} relation between elements of $\mathbf{H}^W$ is the map $S_W:\mathbf{H}^W\times \mathbf{H}^W\to \mathbf{H}$ defined as $S_W(u, v) :=\bigwedge_{z\in W}(u(z)\rightarrow v(z)) $. 
For every $\alpha\in \mathbf{H}$ and $w\in W$, $\{\alpha/ w\}: W\to \mathbf{H}$ is the map defined by $w'\mapsto \alpha$ if $w' = w$ and $w'\mapsto 0$ if $w'\neq w$. %It is easy to see that for every $u\in \mathbf{H}^W$, there is $u = \bigvee_{w\in W}\{u(w)/ w\}$.
Let $u, v: W\to \mathbf{H}$ be any $\mathbf{H}$-subsets, we define $u\subseteq v$ if $u(w)\leq v(w)$ for each $w\in W$, which defines a partial order on $\mathbf{H}^W$. An {\em $\mathbf{H}$-relation} is a map $R: U \times W \rightarrow \mathbf{H}$, where $U$ and $W$ are non-empty sets. Any $\mathbf{H}$-relation $R: U \times W \rightarrow \mathbf{H}$ induces  maps $R^{(0)}[-] : \mathbf{H}^W \rightarrow \mathbf{H}^U$ and $R^{(1)}[-] : \mathbf{H}^U \rightarrow \mathbf{H}^W$ which are defined as follows: for every $f: U \to \mathbf{H}$, and $u: W \to \mathbf{H}$, $R^{(1)}[f]:W\to \mathbf{H}$ such that $x\mapsto \bigwedge_{a\in U}(f(a)\rightarrow R(a, x))$, and $R^{(0)}[u]: U\to \mathbf{H}$ such that $a\mapsto \bigwedge_{x\in W}(u(x)\rightarrow R(a, x))$.

\begin{definition}\label{def: A-polarity}
An $\mathbf{H}$-{\em valued formal context} is a tuple $\mathfrak{P} = (A, X, I)$ such that $A$ and $X$ are non-empty sets, and $I: A\times X\to \mathbf{H}$ is an $\mathbf{H}$-relation. Any $\mathbf{H}$-valued formal context induces maps $(\cdot)^{\uparrow}: \mathbf{H}^A\to \mathbf{H}^X$ and $(\cdot)^{\downarrow}: \mathbf{H}^X\to \mathbf{H}^A$ given by $(\cdot)^{\uparrow} = I^{(1)}[\cdot]$ and $(\cdot)^{\downarrow} = I^{(0)}[\cdot]$. 
\end{definition}

Given an $\mathbf{H}$-valued formal context $(A, X, I)$, it is easy to see that for any $f\in \mathbf{H}^A$ and $u\in \mathbf{H}^X$, there is $f\subseteq u^{\downarrow}$ iff $u\subseteq  f^{\uparrow}$, which means the pair of maps $(\cdot)^{\uparrow}$ and $(\cdot)^{\downarrow}$ form a Galois connection between $(\mathbf{H}^A, \subseteq)$ and $(\mathbf{H}^X, \subseteq)$. A {\em fuzzy formal concept} of $\mathfrak{P}$ is a pair of maps $(f, u)\in \mathbf{H}^A\times \mathbf{H}^X$ such that $f^{\uparrow}=u$ and $u^{\downarrow}=f$. It follows immediately that if a pair of maps $(f, u)\in\mathbf{H}^A\times \mathbf{H}^X$ is a fuzzy formal concept, then there is $f^{\uparrow \downarrow} = f$ and $u^{\downarrow\uparrow} = u$, which means that $f$ and $u$ are {\em Galois-stable}. In this paper, we may use pair $(\val{c}, \descr{c})$ to denote a fuzzy concept $c$ from a given fuzzy formal context where $\val{c}\in\mathbf{H}^A$ and $\descr{c}\in\mathbf{H}^X$, except Section \ref{subsec: LE-ALC} and Section \ref{app:LE-ALC}. $\val{c}$ (resp.~$\descr{c}$) is the {\em extension} (resp.~{\em intension}) of the fuzzy concept. The set of all fuzzy formal concepts of $\mathfrak{P}$ can be partially ordered as follows:

{{\centering
$(f, u)\leq (g, v)\quad \mbox{ iff }\quad f\subseteq g \quad \mbox{ iff }\quad v\subseteq u$.
\par}}
\noindent
Ordered in this way, the set of fuzzy formal concepts of $\mathfrak{P}$ forms a complete lattice, which we refer to as the {\em concept lattice} of $\mathfrak{P}$ and denote by $\mathfrak{P}^+$, such that for  set  $\mathrm{K}$ of fuzzy formal concepts  of $\mathfrak{P}$, $\bigwedge\mathrm{K} := (\bigwedge_{c\in\mathrm{K}}\val{c},(\bigwedge_{c\in\mathrm{K}}\val{c})^\uparrow)$ and $\bigvee\mathrm{K} := (\bigwedge_{c\in\mathrm{K}}\descr{c}))^\downarrow,\bigwedge_{c\in\mathrm{K}}\descr{c})$.

\iffalse
    {{\centering
        $\bigwedge\mathrm{K} := (\bigwedge_{c\in\mathrm{K}}\val{c},(\bigwedge_{c\in\mathrm{K}}\val{c})^\uparrow)$, and $\bigvee\mathrm{K} := (\bigwedge_{c\in\mathrm{K}}\descr{c}))^\downarrow,\bigwedge_{c\in\mathrm{K}}\descr{c})$.
\par}}
\fi

\begin{definition}\label{def: A-frame}
An {\em $\mathbf{H}$-valued enriched formal context} is a tuple $\mathfrak{F} = (\mathfrak{P}, \mathcal{R}_\Box, \mathcal{R}_\Diamond)$ such that $\mathfrak{P} = (A, X, I)$ is an $\mathbf{H}$-valued formal context, and $\mathcal{R}_\Box=\{R_\Box: A\times X\to \mathbf{H}\mid\Box\in\mathcal{G}\}$ and $\mathcal{R}_\Diamond=\{R_\Diamond: X\times A\to \mathbf{H}\mid \Diamond\in\mathcal{F}\}$ are sets of $I$-{\em compatible} $\mathbf{H}$-relations, that is, for any $\Box\in\mathcal{G}$ and $\Diamond\in\mathcal{F}$, $a \in A$ and $x \in X$, $\alpha \in \mathbf{H}$, the $\mathbf{H}$-subsets $R_{\Box}^{(0)}[\{\alpha / x\}]$, $R_{\Box}^{(1)}[\{\alpha / a\}]$,  $R_{\Diamond}^{(0)}[\{\alpha / a\}]$ and $R_{\Diamond}^{(1)}[\{\alpha / x\}]$ are Galois-stable.   
\end{definition}

Every $\mathbf{H}$-valued formal context can be seen as a many-value formal context, where elements in $A$ are "objects" and elements in $X$ are "features". For any $a\in A$, $x\in X$ and $\alpha\in\mathbf{H}$, $I(a,x)=\alpha$ is read as "object $a$ has feature $x$  to degree $\alpha$". The $I$-compatibility conditions can be understood in such way: the sets of all objects (resp.~features) relating to a feature (resp.~object) by modal relations $R_\Box$ and $R_\Diamond$ form concepts. These modal relations have different interpretations like epistemic interpretation \cite{conradie2017toward} or approximation interpretation \cite{conradie2021rough}.

\begin{definition}
    Let $\mathfrak{F}=(\mathfrak{P}, \mathcal{R}_\Box, \mathcal{R}_\Diamond)$ be any $\mathbf{H}$-valued enriched formal context. $\mathfrak{F}^+=(\mathfrak{P}^+, \{[R_\Box]\}_{\Box\in\mathcal{G}}, \{\langle R_\Diamond\rangle\}_{\Diamond\in\mathcal{F}})$ is the {\em complex algebra} of $\mathfrak{F}$, where $\mathfrak{P}^+$ is the concept lattice of $\mathfrak{P}$, and for any $\Box\in\mathcal{G}$ and $\Diamond\in\mathcal{F}$, $[R_{\Box}], \langle R_{\Diamond}\rangle : \mathfrak{P}^{+} \to \mathfrak{P}^{+}$ are maps such that for every $c=(\val{c}, \descr{c}) \in \mathfrak{P}^{+}$, $[R_{\Box}](c):=(R_{\Box}^{(0)}[\descr{c}], (R_{\Box}^{(0)}[\descr{c}])^{\uparrow})$ and $\langle R_{\Diamond}\rangle(c):=  ((R_{\Diamond}^{(0)}[\val{c}])^{\downarrow}, R_{\Diamond}^{(0)}[\val{c}])$.
\end{definition}

An example of $\mathbf{H}$-valued enriched formal context and its complex algebra is provided below.

\begin{example}
  Consider the Heyting algebra $\mathbf{H}$ which is a three-valued chain $(\{0,1/2, 1\}, \leq)$. Note that, on this algebra, the operation $\rightarrow $   is given by $a\rightarrow b=1$ if $a\leq b$, and $a\rightarrow b=b$ otherwise. Let $(A, X, I)$ be the fuzzy formal context where $A=\{a_1, a_2\}$, $X=\{x_1, x_2, x_3\}$, with incidence relation $I: A \times X \to \mathbf{H}$ enriched with $I$-{\em compatible} fuzzy relations $R_\Box:  A \times X \to \mathbf{H}$ and $R_\Diamond:  X \times A \to \mathbf{H}$   given in the following tables:
\begin{center}
\begin{tabular}{ |c|c|c|  } 
\hline
 I & $a_1$ & $a_2$ \\
\hline
 $x_1$ & $1$ & $1/2$\\
\hline
 $x_2$ & $0$ & $1$\\
 \hline
 $x_3$ & $1$ & $1/2$\\
 \hline
\end{tabular}
\quad 
\begin{tabular}{ |c|c|c|  } 
\hline
 \textbf{$R_\Box$} & $a_1$ & $a_2$ \\
\hline
 $x_1$ & $1$ & $1$\\
\hline
 $x_2$ & $0$ & $1$\\
 \hline
 $x_3$ & $1$ & $1$\\
 \hline
\end{tabular}
\quad 
\begin{tabular}{ |c|c|c|c|  } 
\hline
 \textbf{$R_\Diamond$} & $x_1$  & $x_2$  & $x_3$\\
\hline
 $a_1$ & $1$  & $0$  & $1$\\
\hline
 $a_2$ & $1$ & $1$   & $1$\\
 \hline
\end{tabular}
\end{center}
\smallskip

It is easy to check that the only  Galois-stable fuzzy subsets of $A$ (extents of the concepts generated) are the following:
\begin{center}
\begin{tabular}{ |c|c|c| } 
\hline
 & $a_1$ & $a_2$ \\
\hline
 $f_1$ & $1$ & $1$\\
\hline
 $f_2$ & $1$ & $1/2$\\
 \hline
 $f_3$ & $0$ & $1$\\
 \hline 
 $f_4$ & $0$ & $1/2$\\
 \hline
\end{tabular}
\end{center}

Therefore, the  fuzzy sets  $g_i = f_i^\uparrow$  (intents of the concepts generated) are given as follows:
\begin{center}
\begin{tabular}{ |c|c|c|c| } 
\hline
 & $x_1$ & $x_2$ & $x_3$ \\
\hline
 $g_1$ & $1/2$ & $0$ & $1/2$\\
\hline
 $g_2$ & $1$ & $0$ & $1$\\
 \hline
 $g_3$ & $1/2$ & $1$ & $1/2$\\
 \hline 
 $g_4$ & $1$ & $1$ & $1$\\
 \hline
\end{tabular}
\end{center}
The concept lattice of $(A, X, I)$  is represented by the Hasse diagram \ref{fig:Hasse-diagram}.
The operations $[R_\Box]$ and $\langle R_\Diamond \rangle$ on the lattice are given  as follows:

\begin{center}
    \begin{tabular}{|c|c|c|c|c|}
    \hline
         &  $(f_1,g_1)$ & $(f_2,g_2)$ & $(f_3,g_3)$ & $(f_4,g_4)$\\
         \hline
        $[R_\Box]$  & $(f_1,g_1)$  &  $(f_1,g_1)$ & $(f_3,g_3)$ & $(f_3, g_3)$ \\
        \hline
        $\langle R_\Diamond \rangle$ & $(f_2,g_2)$ & $(f_2,g_2)$ & $(f_4,g_4)$ & $(f_4,g_4)$\\
    \hline
    \end{tabular}
\end{center}
\end{example}

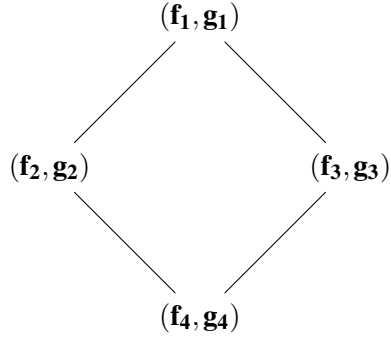
\begin{figure}
\begin{center}
\begin{tikzpicture}
    % Nodes
    \node (A) at (0,2) {\(\mathbf{(f_1, g_1)}\)};
    \node (B) at (-2,0) {\(\mathbf{(f_2, g_2)}\)};
    \node (C) at (2,0) {\(\mathbf{(f_3, g_3)}\)};
    \node (D) at (0,-2) {\(\mathbf{(f_4, g_4)}\)};

    % Edges
    \draw (A) -- (B);
    \draw (A) -- (C);
    \draw (B) -- (D);
    \draw (C) -- (D);
\end{tikzpicture}
\end{center}
\caption{The concept lattice of $(A, X, I)$}
\label{fig:Hasse-diagram}
\end{figure}

\begin{definition}
 An $\mathbf{H}$-{\em model} is a tuple $\mathfrak{M} = (\mathfrak{F}, V)$ such that $\mathfrak{F}$ is an $\mathbf{H}$-valued enriched formal context, and $V: \mathsf{Prop}\to \mathfrak{F}^+$ is called an {\em assignment} on $\mathfrak{F}$. For any $p\in \mathsf{Prop}$, let $V(p): = (\val{p}, \descr{p})$, where $\val{p}: A\to \mathbf{H}$ and $\descr{p}: X\to\mathbf{H}$ such that $\val{p}^\uparrow = \descr{p}$ and $\descr{p}^\downarrow = \val{p}$. $V$ can be homomorphically extended to a unique {\em valuation} $\overline{V}:\mathcal{L}\rightarrow\mathfrak{F}^+$.
\end{definition}

\begin{definition}
Given any $\mathbf{H}$-model $\mathfrak{M}=(\mathfrak{F}, V)$ and $\alpha\in \mathbf{H}$, let $1^{\mathbf{H}^A}:A\rightarrow\mathbf{H}$ be the constant map such that $a\mapsto 1$ for any $a\in A$, and $1^{\mathbf{H}^X}:X\rightarrow\mathbf{H}$ be the constant map such that $x\mapsto 1$ for any $x\in X$. There are two {\em modal satisfaction relations} $\Vdash^\alpha$ and $\succ^\alpha$ defined inductively as follows:

\begin{center}
		\begin{tabular}{r c l}
			$\mathfrak{M}, a\Vdash^\alpha p$ & iff & $\alpha\leq \val{p}(a)$;\\
			$\mathfrak{M}, a\Vdash^\alpha \top$ & iff & $\alpha\leq (1^{\mathbf{H}^A})(a)$ i.e.~always;\\
			$\mathfrak{M}, a\Vdash^\alpha \bot$ & iff & $\alpha\leq (1^{\mathbf{H}^X})^\downarrow (a) = \bigwedge_{x\in X}(1^{\mathbf{H}^X}(x)\to I(a, x)) =  \bigwedge_{x\in X} I(a, x)$;\\
			$\mathfrak{M}, a\Vdash^\alpha \varphi\wedge \psi$ & iff & $\mathfrak{M}, a\Vdash^\alpha \varphi\quad $ and $\quad\mathfrak{M}, a\Vdash^\alpha \psi$;\\
			$\mathfrak{M}, a\Vdash^\alpha \varphi\vee \psi$ & iff & $\alpha\leq (\descr{\varphi}\wedge\descr{\psi})^\downarrow(a) = \bigwedge_{x\in X}(\descr{\varphi}(x)\wedge\descr{\psi}(x)\to I(a, x))$;\\
			$\mathfrak{M}, a\Vdash^\alpha \Box \varphi$ & iff & $\alpha\leq (R^{(0)}_\Box[\descr{\varphi}])(a) = \bigwedge_{x\in X}(\descr{\varphi}(x)\to R_\Box(a, x))$;\\
			$\mathfrak{M}, a\Vdash^\alpha \Diamond \varphi$ & iff & $\alpha\leq ((R^{(0)}_\Diamond[\val{\varphi}])^\downarrow)(a) = \bigwedge_{x\in X}((R^{(0)}_\Diamond[\val{\varphi}])(x)\to I(a, x))$\\
			$\mathfrak{M}, x\succ^\alpha p$ & iff & $\alpha\leq \descr{p}(x)$;\\
			$\mathfrak{M}, x\succ^\alpha \bot$ & iff & $\alpha\leq (1^{\mathbf{H}^X})(x)$ i.e.~always;\\
			$\mathfrak{M}, x\succ^\alpha \top$ & iff & $\alpha\leq (1^{\mathbf{H}^A})^\uparrow (x) = \bigwedge_{a\in A}(1^{\mathbf{H}^A}(a)\to I(a, x)) =  \bigwedge_{a\in A} I(a, x)$;\\
			$\mathfrak{M}, x\succ^\alpha \varphi\vee \psi$ & iff & $\mathfrak{M}, x\succ^\alpha \varphi\quad $ and $\quad\mathfrak{M}, x\succ^\alpha \psi$;\\
			$\mathfrak{M}, x\succ^\alpha \varphi\wedge \psi$ & iff & $\alpha\leq (\val{\varphi}\wedge\val{\psi})^\uparrow(x) = \bigwedge_{a\in A}(\val{\varphi}(a)\wedge\val{\psi}(a)\to I(a, x))$;\\
			$\mathfrak{M}, x\succ^\alpha \Diamond \varphi$ & iff & $\alpha\leq (R^{(0)}_\Diamond[\val{\varphi}])(x) = \bigwedge_{a\in A}(\val{\varphi}(a)\to R_\Diamond(x, a))$, for any $\Diamond\in\mathcal{F}$;\\
			$\mathfrak{M}, x\succ^\alpha \Box \varphi$ & iff & $\alpha\leq ((R^{(0)}_\Box[\descr{\varphi}])^\uparrow)(x) = \bigwedge_{a\in A}((R^{(0)}_\Box[\descr{\varphi}])(a)\to I(a, x))$, for any $\Box\in\mathcal{G}$.\\
		\end{tabular}
	\end{center}  
\end{definition}

{From the above definition,  it is easy to check that for every $\varphi\in \mathcal{L}$, there is $\mathfrak{M}, a\Vdash^\alpha \varphi\mbox{ iff }\alpha\leq \val{\varphi}(a)\mbox{, and }\mathfrak{M}, x\succ^\alpha \varphi \mbox{ iff }\alpha\leq \descr{\varphi}(x)$. Therefore, given any $\mathbf{H}$-model $\mathfrak{M}$, $a,x$ in $\mathfrak{M}$, $\alpha\in\mathbf{H}$ and $\varphi\in\mathcal{L}$, 
$\mathfrak{M}, a\Vdash^\alpha\varphi$ and $\mathfrak{M}, x\succ^\alpha\varphi$ can be read as "object $a$ is a member of category $\varphi$  to degree $\alpha$" and "feature $x$ describes category $\varphi$ to degree $\alpha$", respectively. Here, the interpretation of the propositional connectives $\vee$ and $\wedge$ in the framework described above reproduces the standard notion of join and the meet of fuzzy formal concepts used in fuzzy FCA. The interpretations of the modal operators $\Box$ and $\Diamond$ are motivated by algebraic properties and  duality theory for modal operators on lattices.  

The definitions of $\alpha$-membership and $\alpha$-description given above stand in the same relationship to their crisp counterparts in  \cite{conradie2016categories, conradie2017toward} as Fitting's definition of the  $\alpha$-satisfaction  of modal formulas on many-valued Kripke frame stands to the definition of satisfaction of modal formulas on (crisp) Kripke frames. More precisely, Fitting derives many-valued semantics for modal logic by reading the standard first-order satisfaction clauses for modal formulas as formulas of many-valued predicate logic interpreted on many-valued first-order structures in the standard way (see e.g.\ \cite[Chapter 5]{Hajek1998-HAJMOF}). We proceed similarly. For example, consider the defining clause for the satisfaction of a box-formula on a (crisp) enriched formal context  rewritten in first-order syntax as  
\begin{equation}\label{eqn:Box:satisfaction:crisp}
  \mathfrak{M},  a \Vdash \Box\phi \quad \text{ iff } \quad \forall x ( x \in \descr{\phi}\to a R_{\Box} x).  
\end{equation}
Reading the clause above as a statement of (two-sorted) $\mathbf{H}$-valued predicate logic, the universal quantifier is interpreted as a conjunction indexed by the set $X$ (itself interpreted as a meet in $\mathbf{H}$), the membership statement $x \in \descr{\phi}$ is now $\mathbf{H}$-valued and more naturally written as $\descr{\phi}(x)$, and the same holds for the atomic formula $a R_{\Box} x$, which we write as $R_{\Box}(a,x)$. The implication $\to$ is interpreted as the implication $\to^{\mathbf{H}}$ of $\mathbf{H}$. Furthermore, $\Vdash$ is now interpreted as an  $\mathbf{H}$-valued relation between $H$ and $\mathcal{L}$ and therefore rather than asking if it holds between an object $a$ and a formula, we ask whether it gives a value of at least $\alpha \in \mathbf{H}$ when applied to an object and formula. Thus, \eqref{eqn:Box:satisfaction:crisp} is transformed into 

\begin{equation}\label{eqn:Box:satisfaction:MV}
  \mathfrak{M},  a \Vdash^{\alpha} \Box \phi \quad \text{ iff  } \quad \alpha \leq \bigwedge_{x \in X} ( \descr{\phi}(a) \to^{\mathbf{A}} R_{\Box}(a, x)). 
\end{equation}
Different interpretations of modal operators 
in $\mathrm{LE}$-logic  transfers naturally to the many-valued interpretation. 
 Instead of attributing features to objects absolutely, in this setting, agents can make such attributions in a graded way, and accordingly, their perceived categories consist of stable pairs of $\mathbf{H}$-valued sets of object and features.

\subsection{Non-distributive description logic $\mathrm{LE}$-$\mathcal{ALC}$}
\label{subsec: LE-ALC}
%In this section, we introduce the non-distributive   description logic $\mathrm{LE}$-$\mathcal{ALC}$, which  was introduced in  \cite{van2023non} as a description logic counterpart of non-distributive modal logic to provide a natural description logic to reason about (enriched) formal contexts and the concepts defined by them.

%In this section, we gather some basic definitions of non-distributive description logic $\mathrm{LE}$-$\mathcal{ALC}$ from \cite{van2023non}.

The language of $\mathrm{LE}$-$\mathcal{ALC}$ is intended to be interpreted on the complex algebras of enriched formal contexts. The set of the {\em individual names} for $\mathrm{LE}$-$\mathcal{ALC}$ is the union of two disjoint sets $\mathsf{OBJ}$ and $\mathsf{FEAT}$, which are interpreted as the {\em objects} and {\em features} of the enriched formal contexts, respectively. %Let $\mathsf{OBJ}$ and $\mathsf{FEAT}$ be disjoint sets of {\em individual names} for objects and features, respectively. 
The set $\mathcal{R}$ of the {\em role names} for $\mathrm{LE}$-$\mathcal{ALC}$ is the  union of three types  of  relations: (1) A unique relation  $I \subseteq \mathsf{OBJ} \times \mathsf{FEAT}$; (2) a set of relations $\mathcal{R}_\Box$  of the form  
$ R_\Box \subseteq \mathsf{OBJ} \times \mathsf{FEAT}$; (3) a  set of relations  $\mathcal{R}_\Diamond$  of the form  
$ R_\Diamond \subseteq \mathsf{FEAT} \times \mathsf{OBJ}$. While $I$ is intended to be interpreted as the incidence relation of enriched formal contexts and encodes information on which objects have which features, the relations in $\mathcal{R}_\Box$  and $\mathcal{R}_\Diamond$
encode additional relationships between objects and features (cf.~\cite{conradie2021rough} for an extended discussion). For any set  $\mathcal{C}$ of {\em primitive concepts}, the $\mathrm{LE}$-$\mathcal{ALC}$ {\em concepts} are defined as follows:

{{\centering
 $C \ceq D\ |\ C_1 \wedge C_2\ |\ C_1\vee C_2\ |\ \langle R_\Diamond \rangle C\ |\ [R_\Box ]C$   
\par}}

\noindent where $D \in \mathcal{C}$, $R_\Box \in \mathcal{R}_\Box$ and $R_\Diamond \in \mathcal{R}_\Diamond$. 
%This language  matches the language of $\mathrm{LE}$-logic and has an analogous intended  interpretation on the complex algebras of enriched formal contexts  (cf.~\cite{conradie2021rough}).
%The language of $\mathrm{LE}$-$\mathcal{ALC}$ is intended to be interpreted on the complex algebras of enriched formal contexts. %Let $\mathfrak{F}$ be an enriched formal context and $\mathfrak{F}^+$ is its complex algebra
Concepts such as $C_1 \vee C_2$ (resp.~$C_1\wedge C_2$) are interpreted as the smallest common superconcept (resp.~the greatest common subconcept) as in FCA. Because there is no canonical and natural way to interpret negations in non-distributive (lattice-based) settings, we do not include $\neg C$ as a concept. Concepts  $\langle R_\Diamond \rangle C$ and $[R_\Box]C$ are interpreted by using the corresponding normal operations (i.e.~$\Diamond$ and $\Box$, respectively) on the complex algebras of enriched formal contexts. We do not include the concept names $\top$ and $\bot$ in the language, as the naive tableaux rules corresponding to these concepts can lead to non-terminating tableaux algorithms. In the future, we intend to extend the results in this paper to the description logic with concept names  $\top$ and $\bot$  in the language. 
%$\langle R_\Diamond\rangle$ and $[R_\Box]$in $\mathrm{LE}$-logic, respectively. 
%This is analogous to the way $\exists r$ and $\forall r$ in $\mathcal{ALC}$ are interpreted as the corresponding normal operators (i.e.~$\Diamond$ and $\Box$, respectively) on the complex algebras of Kripke frames.
%as the operations $\Box$ and $\Diamond$ of classical modal logic.
We do not use the symbols $\forall r$ and $\exists r$ in the context of $\mathrm{LE}$-$\mathcal{ALC}$, because the semantic clauses of the modal operators in LE-logic use universal quantifiers, and hence using the same notation verbatim would be ambiguous or misleading. 

The {\em TBox axioms} in $\mathrm{LE}$-$\mathcal{ALC}$ are of the shape $C_1 \sqsubseteq C_2$, where $C_1$ and $C_2$ are concepts\footnote{As is standard in DL (cf.~\cite{DLhandbook} for more details), general concept inclusion of the form $C_1 \sqsubseteq C_2$ can be rewritten as concept definition $C_1 \equiv C_2 \wedge C_3$, where $C_3$ is a new concept name.}.
We use $C_1 \equiv C_2$ as a  shorthand for  $C_1 \sqsubseteq C_2$ and $C_2 \sqsubseteq C_1$. 
The {\em ABox terms} in $\mathrm{LE}$-$\mathcal{ALC}$ are of the form:
\noindent

{{\centering
  $aR_\Box x,\quad xR_\Diamond a,\quad aIx,\quad a:C,\quad x::C$  
\par}}
\noindent
The {\em ABox assertions} in $\mathrm{LE}$-$\mathcal{ALC}$ are of the form $t$, $\neg t$, where $t$ is any ABox term. We refer to the terms of first three types and their negations as {\em relational terms}. 
We denote an arbitrary $\mathrm{LE}$-$\mathcal{ALC}$ {\em ABox} (resp.~{\em TBox}) with $\mathcal{A}$ (resp~$\mathcal{T}$). The interpretations of the ABox assertions $a:C$ and  $x::C$ are "object $a$ is a member of concept $C$" and  "feature $x$ is in the description of  concept $C$", respectively. Note that we add the negative terms to ABoxes explicitly, as the $\mathrm{LE}$-$\mathcal{ALC}$ concepts do not contain negation.
%We now describe the semantics for $\mathrm{LE}$-$\mathcal{ALC}$. 
An {\em interpretation} for the language of $\mathrm{LE}$-$\mathcal{ALC}$ is a tuple $\mathcal{M} = (\mathfrak{F}, \cdot^\mathcal{M}) $, where $\mathfrak{F}= (A, X, I^\mathcal{M}, \{R_\Box^\mathcal{M} \mid  R_\Box \in \mathcal{R}_\Box \}, \{R_\Diamond^\mathcal{M} \mid  R_\Diamond \in \mathcal{R}_\Diamond \}
)$ is
an enriched formal context and  $\cdot^\mathcal{M}$ maps:

\noindent 1.~each individual name $a \in \mathsf{OBJ}$ (resp.~$x \in \mathsf{FEAT}$),   to some $a^\mathcal{M} \in A$ (resp.~$x^\mathcal{M} \in X$) in $\mathfrak{F}$;

\noindent 2.~role names $I$, $R_\Box \in \mathcal{R}_\Box$ and $R_\Diamond \in \mathcal{R}_\Diamond$ to  relations $I^\mathcal{M}$, $R_\Box^\mathcal{M}$ and $R_\Diamond^\mathcal{M}$ in $\mathfrak{F}$, notice that $R_\Box^\mathcal{M}$ and $R_\Diamond^\mathcal{M}$ in $\mathfrak{F}$ are all $I^\mathcal{M}$-compatible;

\noindent 3.~each primitive concept $D$ to some $D^\mathcal{M} \in \mathfrak{F}^+$, and other concepts as follows:
\smallskip

{{\centering
\small{
    \begin{tabular}{l l ll}
      
    $(C_1 \wedge C_2)^{\mathcal{M}} = C_1^{\mathcal{M}} \wedge C_2^{\mathcal{M}}$&  $(C_1\vee C_2)^{\mathcal{M}} = C_1^{\mathcal{M}} \vee C_2^{\mathcal{M}}$   & $([R_\Box]C)^\mathcal{M} = [R_\Box^\mathcal{M}]C^{\mathcal{M}}$  &  $(\langle R_\Diamond \rangle C)^\mathcal{M} =\langle  R_\Diamond^{\mathcal M} \rangle C^{\mathcal{M}} $\\
    \end{tabular}
    }
\par}}
\smallskip
\noindent
where all the operators are defined as on the complex algebra $\mathfrak{F}^+$. The {\em satisfiability relation} for an interpretation $\mathcal{M}=(\mathfrak{F}, \cdot^\mathcal{M})$ is defined as follows:

\noindent 1.~$\mathcal{M} \vDash C_1\equiv C_2$ iff $\val{C_1^\mathcal{M}} = \val{C_2^\mathcal{M}}$ iff $\descr{C_2^\mathcal{M}} = \descr{C_1^\mathcal{M}}$.

\noindent 2.~$\mathcal{M} \vDash a:C$ iff $a^\mathcal{M} \in \val{C^\mathcal{M}}$, and  $\mathcal{M} \vDash x::C$ iff $x^\mathcal{M} \in \descr{C^\mathcal{M}}$.

\noindent 3.~$\mathcal{M} \vDash  a I x$ (resp.~$a R_\Box x$, $x R_\Diamond a$) iff $a^{\mathcal{M}} I^{\mathcal{M}} x^{\mathcal{M}} $ (resp.~$a^{\mathcal{M}} R_\Box^{\mathcal{M}} x^{\mathcal{M}} $, $x^{\mathcal{M}} R_\Diamond^{\mathcal{M}} a^{\mathcal{M}} $). 

\noindent 4.~$\mathcal{M} \vDash  \neg \alpha$,  where $\alpha$ is any ABox term, iff $ \mathcal{M}\not\vDash \alpha$. 

  %The framework of $\mathrm{LE}$-$\mathcal{ALC}$ formally brings  FCA and DL together in two important ways: (1) the concepts of $\mathrm{LE}$-$\mathcal{ALC}$   are naturally interpreted as formal concepts in FCA;  (2) the language of $\mathrm{LE}$-$\mathcal{ALC}$  is designed  to represent knowledge and reasoning in the setting of  enriched formal contexts. 

\noindent An $\mathrm{LE}$-$\mathcal{ALC}$ ABox (resp.~TBox) is a finite set of ABox assertions (resp.~TBox axioms) in $\mathrm{LE}$-$\mathcal{ALC}$. A {\em knowledge base} in $\mathrm{LE}$-$\mathcal{ALC}$ is a tuple $(\mathcal{A}, \mathcal{T})$, where $\mathcal{A}$ is an $\mathrm{LE}$-$\mathcal{ALC}$ ABox, and $\mathcal{T}$ is an $\mathrm{LE}$-$\mathcal{ALC}$ TBox. An interpretation $\mathcal{M}$ is a {\em model} for a knowledge base $(\mathcal{A}, \mathcal{T})$ if $\mathcal{M}\vDash \mathcal{A}$ and $\mathcal{M} \vDash \mathcal{T}$. A knowledege base $(\mathcal{A}, \mathcal{T})$ is {\em consistent} if there is a model for it. An ABox $\mathcal{A}$ (resp.~TBox $\mathcal{T}$) is consistent if there exists a model for knowledege base $(\mathcal{A},\emptyset)$ (resp.~$(\emptyset,\mathcal{T}$)).

%Moreover, for every $R_\Box \in \mathcal{R}_\Box$ and $R_\Diamond \in \mathcal{R}_\Diamond$, we will also impose the condition that  $a \in \val{[R_\Box]C}$ (resp.~$x \in \descr{\langle R_\Diamond \rangle C}$)  iff $a R_\Box x_C$ (resp.~$x R_\Diamond a_C$), where $a_C$ and $x_C$ are the classifying object and the classifying feature of $C$, respectively. Note that we can always assume w.l.o.g.~that any consistent  ABox $\mathcal{A}$ is satisfiable in  a model with classifying objects and features (cf.~Theorem \ref{thm:completeness}). 
%
%
 %This algorithm is the same as the Algorithm \ref{alg:tableaux}, except for the fact that the input is any $\mathrm{LE}$-$\mathcal{ALC}$ ABox and the the  expansion rules used in the algorithm are as reported in Section \ref{ssec:expansion rules}. 
%

%\subsection{Expansion rules for   $\mathrm{LE}$-$\mathcal{ALC}$} 
%\label{ssec:expansion rules}

\subsection{Tableaux algorithm for checking $\mathrm{LE}$-$\mathcal{ALC}$ ABox consistency}\label{app:LE-ALC}

In this section, we recall the tableaux algorithm for checking $\mathrm{LE}$-$\mathcal{ALC}$ ABox consistency, which is introduced in \cite{van2023old,van2023non}. We noticed a  mistake in the proof of termination and $I$-compatibility in an earlier version of this paper \cite{van2023old} in which concepts $\top$ and $\bot$ were included as concept names in the language. In the updated version \cite{van2023non},  we prove that the result is valid in the restricted language that does not contain $\top$ and $\bot$ in the language. In this paper, we work with the restricted language as in \cite{van2023non}.

An $\mathrm{LE}$-$\mathcal{ALC}$ ABox $\mathcal{A}$ contains a {\em clash} if it contains both ABox assertions $\beta$ and $\neg \beta$. The expansion rules are designed so that the expansion of $\mathcal{A}$ will contain a clash iff  $\mathcal{A}$ is inconsistent. The set $sub(C)$ of the sub-formulas of any $\mathrm{LE}$-$\mathcal{ALC}$ concept $C$ is defined as usual. A concept $C'$ {\em occurs} in $\mathcal{A}$, write as $C' \in \mathcal{A}$, if $C'\in sub(C)$ for some $C$ such that one of the ABox assertions $a:C$,  $x::C$, $\neg a:C$, or $\neg x ::C$ is in $\mathcal{A}$. An individial name $b$ (resp.~$y$) {\em occurs} in $\mathcal{A}$, write as $b \in \mathcal{A}$ (resp.~$y \in \mathcal{A}$), if there is an ABox assertion in $\mathcal{A}$ which contains $b$ (resp.~$y$). 

%We now introduce a tableaux algorithm for  $\mathrm{LE}$-$\mathcal{ALC}$ ABoxes. 

The Algorithm \ref{alg:LE-ALC}  (constructively) provides a model $\mathcal{M}=(\mathfrak{F},\cdot^\mathcal{M})$ for every consistent ABox $\mathcal{A}$, where $\mathfrak{F}= (A, X, I, \mathcal{R}_\Box, \mathcal{R}_\Diamond)$. This model has the following property:
\begin{center}
For any $C \in \mathcal{A}$, there exist $a_C \in A$ and $x_C \in X$ such that, for any $a \in A$ (resp.~$x \in X$), $a \in \val{C^{\mathcal{M}}}$ (resp.~$x \in \descr {C}^{\mathcal{M}}$) iff $a I x_C$ (resp.~$a_C I x$). 
\end{center}

\noindent
We call $a_C$ and $x_C$ the  {\em classifying object} and the {\em classifying feature} of $C$, respectively. To make our notation more  readable, we will write $a_{\Box C}$, $x_{\Box C}$ (resp.~$a_{\Diamond C}$, $x_{\Diamond C}$) instead of $a_{[R_\Box]C}$, $x_{[R_\Box]C}$ (resp.~$a_{\langle R_\Diamond\rangle C}$, $x_{\langle R_\Diamond\rangle C}$).

\begin{algorithm} 
\caption{tableaux algorithm for checking $\mathrm{LE}$-$\mathcal{ALC}$ ABox consistency }\label{alg:LE-ALC}
\label{alg:tableaux}
    \hspace*{\algorithmicindent} \textbf{Input}: An   $\mathrm{LE}$-$\mathcal{ALC}$ ABox $\mathcal{A}$. \quad \textbf{Output}: whether $\mathcal{A}$ is inconsistent. 
    \begin{algorithmic}[1]
        %\Procedure{Expand}{$\mathcal{A}$}
        \State \textbf{if} there is a clash in $\mathcal{A}$ \textbf{then} \textbf{return} ``inconsistent''.
        \State \textbf{pick} any applicable expansion rule $R$, \textbf{apply} $R$ to $\mathcal{A}$ and proceed recursively.
      \State \textbf{if}  no expansion  rule is applicable   \textbf{return} "consistent".
        % expanded ABox $\mathcal{A}'$ by applying that rule. Return \textsc{EXPAND}$(\mathcal{A}')$. 
       % \EndProcedure
    \end{algorithmic}
\end{algorithm}

{{\footnotesize
\centering
\begin{tabular}{cc} 
\mc{1}{c}{\textbf{Creation rule}} & \mc{1}{c}{\textbf{Basic rule}} \\
\AXC{For any $C \in \mathcal{A}$}
\RL{\fns create}
\UIC{$a_C:C$, \quad $x_C::C$}
\DP
 \ & \ 
\rule[-1.85mm]{0mm}{8mm}
\AXC{$b:C, \quad y::C$}
\LL{\fns $I$}
\UIC{$b I y$}
\DP 

%\end{tabular}
%\quad\quad\quad\quad\quad
%\begin{tabular}{rl}
%\mc{2}{c}{\textbf{Rules for $\top$ and $\bot$}}  \\
%\rule[-1.85mm]{0mm}{8mm}

\end{tabular}

\begin{tabular}{cccc}
\mc{2}{c}{\textbf{Rules for the logical connectives}} & \mc{2}{c}\textbf{$I$-compatibility rules} \\
\rule[-1.85mm]{0mm}{8mm}
 \AXC{$b:C_1 \wedge  C_2$}
\RL{\fns $\wedge_A$}
\UIC{$b:C_1,$ \quad $b:C_2$}
\DP 
\ & \

\AXC{$y::C_1 \vee  C_2$}
\LL{\fns $\vee_X$}
\UIC{$y::C_1,$ \quad $y::C_2$}
\DP 
\  &  \

\AXC{$b I \Box y$}
\LL{\fns $\Box y$}
\UIC {$b R_\Box y$}
\DP

\ & \

\AXC{$b I \blacksquare y$}
\RL{\fns $\blacksquare y$}
\UIC {$y R_\Diamond b$}
\DP
\\[3mm]

 \AXC{$b:[R_\Box]C,$ \quad  $y::C$}
\LL{\fns $\Box$}
\UIC{$b R_\Box y$}
\DP 
\ & \ 
  \AXC{$y::\langle  R_\Diamond \rangle C,$ \quad  $b:C$}
\RL{\fns $\Diamond$}
\UIC{$y R_\Diamond b$}
\DP 
\ & \
\AXC{$\Diamond b I  y$}
\LL{\fns $\Diamond b$}
\UIC {$y R_\Diamond b$}
\DP
\ & \
\AXC{$\Diamondblack b I y$}
\RL{\fns $\Diamondblack b$}
\UIC {$b R_\Box y$}
\DP
\\[3 mm]

\end{tabular}
\begin{tabular}{rl}
\mc{2}{c}{\textbf{invese rule for connectives}}  \\

   \AXC{$b:C_1$, $b:C_2$, $C_1 \wedge C_2 \in \mathcal{A}$}
\LL{\fns $\wedge_A^{-1}$}
\UIC {$b:C_1 \wedge C_2$}
\DP
\ & \
\AXC{$y::C_1$, $y::C_2$, $C_1 \vee C_2 \in \mathcal{A}$}
\RL{\fns $\vee_X^{-1}$}
\UIC {$y::C_1 \vee C_2$}
\DP
\end{tabular}

\begin{tabular}{rl}
\mc{2}{c}{\textbf{Adjunction rules}}  \\
\rule[-1.85mm]{0mm}{8mm}
\AXC{$ b R_\Box y$}
\LL{\fns $R_\Box$}
\UIC{$\Diamondblack b I y,$ \quad  $b I \Box y$}
\DP 
\ & \ 
\AXC{$y R_\Diamond b$}
\RL{\fns $R_\Diamond$}
\UIC{$\Diamond b I y,$ \quad  $b I \blacksquare y$}
\DP 
\\ [2mm]
\end{tabular}

\begin{tabular}{cccc}
\mc{2}{c}{\textbf{Basic rules for negative assertions}} & \mc{2}{c}{\textbf{Appending  rules}} \\
\rule[-1.85mm]{0mm}{8mm}
\AXC{$\neg (b:C)$}
\LL{\fns $\neg b$}
\UIC{$\neg (b I x_C)$}
\DP 
\ & \ 
\AXC{$\neg (x::C)$}
\RL{\fns $\neg x$}
\UIC{$\neg (a_C I x)$}
\DP 
\ & \ 
\AXC{$b I x_C$}
\LL{\fns $x_C$}
\UIC{$b:C$}
\DP 
\ & \ 
\AXC{$a_C I y$}
\RL{\fns $a_C$}
\UIC{$y::C$}
\DP 
\\
\end{tabular}
\par}}
\smallskip 

\noindent In the adjunction rules the individuals $\Diamondblack b$, $\Diamond b$, $\Box y$, and $\blacksquare y$ are new and unique for each relation $R_\Box$ and $R_\Diamond$, except for $\Diamond a_C= a_{\Diamond C}$ and 
$\Box x_C= x_{\Box C}$\footnote{The new individual names $\Diamondblack b$, $\Diamond b$, $\Box y$,   and $\blacksquare y$ appearing in tableaux expansion are purely syntactic entities.  Intuitively, they correspond to the classifying objects (resp.~features) of the concepts $\Diamondblack \textbf{b}$, $\Diamond \textbf{b}$ (resp.~$\Box \textbf{y}$, resp.~$\blacksquare \textbf{y}$),  where $\mathbf{b}=(b^{\uparrow\downarrow}, b^\uparrow)$ (resp.~$\mathbf{y}=(y^\downarrow, y^{\downarrow\uparrow})$) is the concept generated by $b$ (resp.~$y$),  and  the operation  $\Diamondblack$ (resp.$\blacksquare$) is the left (resp.~right) adjoint of operation $\Box$ (resp.~$\Diamond$).}. 

The following theorem follows from the results in \cite{van2023non}:
\begin{theorem}\label{thm:ABox LE-ALC}
Algorithm \ref{alg:LE-ALC} provides a sound and complete polynomial-time decision procedure for checking consistency of  $\mathrm{LE}$-$\mathcal{ALC}$ ABoxes. 
\end{theorem}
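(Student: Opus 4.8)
The plan is to reduce \emph{Theorem~\ref{thm:ABox LE-ALC}} to three ingredients, exactly as in a standard tableaux correctness argument: (i) termination together with a polynomial bound on the size of the saturated ABox; (ii) soundness of the expansion rules, meaning each rule turns a consistent ABox into a consistent one, so that producing a clash certifies inconsistency; and (iii) a model-existence lemma, namely that every clash-free ABox to which no rule applies yields an interpretation satisfying it, with the ``classifying object/feature'' property displayed before Algorithm~\ref{alg:LE-ALC}. Granting (i)--(iii), correctness follows: if the algorithm returns ``inconsistent'' a clash was produced, so by (ii) and induction on the run the input ABox was inconsistent; if it returns ``consistent'' the final ABox is clash-free and rule-saturated, so by (iii) the input has a model; and by (i) the algorithm always halts, in polynomially many polynomial-time steps. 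So $\mathcal{A}$ is inconsistent iff Algorithm~\ref{alg:LE-ALC} returns ``inconsistent''.

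For (i), the decisive facts are those flagged in the introduction. Since $\top$ and $\bot$ are not in the concept language, the \textsf{create} rule fires only on the finitely many concepts occurring in the initial ABox, and this set is closed under subformulas and never grows; hence the classifying names $a_C,x_C$ form a fixed set of size polynomial in $|\mathcal{A}|$. The adjunction rules introduce the auxiliary names $\Diamondblack b,\ \Diamond b,\ \Box y,\ \blacksquare y$, but the identifications $\Diamond a_C=a_{\Diamond C}$ and $\Box x_C=x_{\Box C}$, together with the fact that $\langle R_\Diamond\rangle$ is interpreted by a universal rather than an existential quantifier, prevent the repeated alternation of adjunction and creation from spawning genuinely new names without bound; one checks that the set of individual names that can ever appear is polynomially bounded. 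Consequently the set of all ABox assertions over this bounded vocabulary of names, role names and occurring concepts is of polynomial size; as every rule only adds assertions and each application strictly enlarges the current ABox, the algorithm terminates after polynomially many steps, and applicability-testing and a single step are plainly polynomial-time.

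For (ii), argue by induction on the length of the run: given a model $\mathcal{M}$ of the current ABox and a rule application, build a model of the result. The connective rules ($\wedge_A,\vee_X,\wedge_A^{-1},\vee_X^{-1}$) and the modal rules ($\Box$, $\Diamond$, the $I$-compatibility rules $\Box y,\blacksquare y,\Diamond b,\Diamondblack b$, and the adjunction rules $R_\Box,R_\Diamond$) are immediate from the definitions of $\wedge,\vee,[R_\Box],\langle R_\Diamond\rangle$ on the complex algebra $\mathfrak{F}^+$ and from the residuation/adjunction laws underlying $I$-compatibility; the basic rule $I$ and the negative and appending rules follow once the classifying property is in place. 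The delicate rules are \textsf{create} and the appending rules $x_C,a_C$, which introduce fresh names: here one extends the underlying formal context by objects and features realizing the required incidences (the row of $a_C$ being $\descr{C^\mathcal{M}}$, the column of $x_C$ being $\val{C^\mathcal{M}}$, which one checks does not alter the lattice of Galois-stable sets) and extends the enriching relations to these, the real work being to preserve $I$-compatibility of $\mathcal{R}_\Box,\mathcal{R}_\Diamond$ in the enlarged context. This $I$-compatibility bookkeeping is the main obstacle of the soundness half; a clash ABox is trivially inconsistent, so inconsistency propagates back to the input.

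For (iii), from a rule-saturated clash-free ABox read off $\mathfrak{F}=(A,X,I,\mathcal{R}_\Box,\mathcal{R}_\Diamond)$ from the object/feature names and the relational assertions, interpret each name by itself and each primitive concept $D$ through its classifying names. One then proves, by induction on the structure of the concepts $C$ occurring in $\mathcal{A}$, the displayed property that $a\in\val{C^\mathcal{M}}$ iff $aIx_C$ (and dually) --- the connective and appending rules covering the propositional cases, the $\Box,\Diamond$ and adjunction rules the modal ones --- and \emph{simultaneously} that $\mathcal{R}_\Box,\mathcal{R}_\Diamond$ are $I$-compatible in $\mathfrak{F}$, so that $\mathfrak{F}^+$ is well defined and $\mathcal{M}$ is a genuine interpretation; finally one verifies $\mathcal{M}\vDash\mathcal{A}$ assertion by assertion. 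Establishing $I$-compatibility of the constructed relations is again the crux, and is precisely the step that breaks down in the presence of $\top$ and $\bot$ (the error in \cite{van2023old} noted above), which is why the whole argument is run in the restricted language; the detailed verifications are those of \cite{van2023non}.
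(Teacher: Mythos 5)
Your proposal is correct and follows essentially the same route as the paper (which recalls this theorem from \cite{van2023non} and replays the identical strategy for the fuzzy case in Sections 4--6): termination with a polynomial bound on names and assertions, consistency preservation under each expansion rule via extending a model with classifying objects/features (the paper's ``completeness'', Lemma \ref{lem:characteristic consistency} in the fuzzy setting), and model existence from a clash-free saturated ABox with the $I$-compatibility of the constructed relations as the crux (the paper's ``soundness'', Lemmas \ref{lem:atomic concepts}--\ref{lem:Soundness pre}). The only discrepancy is terminological: what you call soundness of the rules is what the paper calls completeness, and your model-existence lemma is the paper's soundness.
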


\begin{remark}\label{acyclic TBoxes}
    The Algorithm \ref{alg:LE-ALC} can be extended to  an exponential-time algorithm for checking consistency of knowledge bases with acyclic TBoxes via the unraveling technique (cf.~\cite{DLhandbook} for details). 
\end{remark}

 \section{Description logic $\mathrm{LE}$-$\mathcal{FALC}$}\label{sec:LE-FALC}
 
In this section, we introduce the fuzzy non-distributive description logic $\mathrm{LE}$-$\mathcal{FALC}$ based on many-valued  $\mathrm{LE}$-logic, which can be seen as a fuzzy generalization of description logic $\mathrm{LE}$-$\mathcal{ALC}$. 

The {\em individual names}, {\em role names}, {\em primitive concepts}, {\em concepts} and {\em ABox terms} in the language of $\mathrm{LE}$-$\mathcal{FALC}$ are the same as those in the language of $\mathrm{LE}$-$\mathcal{ALC}$ (recall Section \ref{subsec: LE-ALC}), except that they are intended to be interpreted on the complex algebras of $\mathbf{H}$-valued enriched formal contexts as described in Section \ref{subsec:Many-valued polarity-based semantics}. 
 %The {\em ABox terms} in $\mathrm{LE}$-$\mathcal{FALC}$ are of the form:
%{{\centering
%$R_\Box (a,x), \quad R_\Diamond(x,a),\quad I(a,x),\quad a:C,\quad x::C$
%\par}}
%\noindent where $a\in\mathsf{OBJ}$ and $x\in\mathsf{FEAT}$ are individual names, and $R_\Box\in\mathcal{R}_\Box$ and $R_\Diamond\in\mathcal{R}_\Diamond$ are role names. 
The {\em ABox assertions} in $\mathrm{LE}$-$\mathcal{FALC}$ are of the form:

{{\centering
$\alpha \leq t, \quad  \alpha \not \leq t$,  %\ydnote{why?} 
\par}}
\noindent where $\alpha \in \mathbf{H}$, and $t$ is an ABox term.  
%An $\mathrm{LE}$-$\mathcal{FALC}$ {\em ABox} is a finite set of ABox assertions in $\mathrm{LE}$-$\mathcal{FALC}$.
For any of the ABox terms $t$, $\alpha \leq t$ (resp.~$\alpha \not\leq t$) is interpreted as $t$ is true  at least to the extent $\alpha$ (resp.~it is not the case that $t$ is true  at least to the extent $\alpha$).
%where $t$ is represented as in $\mathrm{LE}$-$\mathcal{ALC}$.
The {\em TBox axioms in $\mathrm{LE}$-$\mathcal{FALC}$} are of the form:

{{\centering
$C_1\equiv C_2,$%\footnote{As is standard in DL (cf.~\cite{DLhandbook} for more details), general concept inclusion of the form $C_1 \sqsubseteq C_2$ can be rewritten as concept definition $C_1 \equiv C_2 \wedge C_3$, where $C_3$ is a new concept name.}
\par}}
\noindent where $C_1$ and $C_2$ are $\mathrm{LE}$-$\mathcal{FALC}$ concepts. 

 %An $\mathrm{LE}$-$\mathcal{FALC}$ {\em TBox} is a finite set of TBox axioms in $\mathrm{LE}$-$\mathcal{FALC}$. 

  An {\em interpretation} for $\mathrm{LE}$-$\mathcal{FALC}$ knowledge base is a tuple $\mathcal{M} = (\mathfrak{F}, \cdot^\mathcal{M})$, where $\mathfrak{F} =(A, X, I^\mathcal{M}, \{R_\Box^\mathcal{M} \mid  R_\Box \in \mathcal{R}_\Box \}, \{R_\Diamond^\mathcal{M} \mid  R_\Diamond \in \mathcal{R}_\Diamond \}
)$ is an $\mathbf{H}$-valued enriched formal context, and   $\cdot^\mathcal{M}$ maps:

\noindent 1.~any individual name $a \in \mathsf{OBJ}$ (resp.~$x \in \mathsf{FEAT}$) to some $a^\mathcal{M} \in A$ (resp.~$x^\mathcal{M} \in X$);

\noindent 2.~role names $I$, $R_\Box \in \mathcal{R}_\Box$ and $R_\Diamond \in \mathcal{R}_\Diamond$ to $\mathbf{H}$-relations 
$I^\mathcal{M}$, $R_\Box^\mathcal{M}$ and $R_\Diamond^\mathcal{M}$ in $\mathfrak{F}$, notice that $R_\Box^\mathcal{M}$ and $R_\Diamond^\mathcal{M}$ in $\mathfrak{F}$ are all $I^\mathcal{M}$-compatible;

\noindent 3.~any primitive concept $D$ to  $D^\mathcal{M} \in \mathfrak{F}^+$, and the other concepts as follows:
\smallskip

{{\centering
\small{
    \begin{tabular}{l l ll}
      
    $(C_1 \wedge C_2)^{\mathcal{M}} = C_1^{\mathcal{M}} \wedge C_2^{\mathcal{M}}$&  $(C_1\vee C_2)^{\mathcal{M}} = C_1^{\mathcal{M}} \vee C_2^{\mathcal{M}}$   & $([R_\Box]C)^\mathcal{M} = [R_\Box^\mathcal{M}]C^{\mathcal{M}}$  &  $(\langle R_\Diamond \rangle C)^\mathcal{M} =\langle  R_\Diamond^{\mathcal M} \rangle C^{\mathcal{M}} $,\\
    \end{tabular}
    }
\par}}
\smallskip
\noindent where all the connectives are interpreted like in the many-valued $\mathrm{LE}$-logic. For any interpretation $\mathcal{M}=(\mathfrak{F}, \cdot^\mathcal{M})$, we define  an $\mathbf{H}$-valuation $v^\mathcal{M}$ for  ABox terms as follows:

    %\noindent 1.~$v^\mathcal{M}( C_1\equiv C_2) = \bigwedge_{a \in A} (\val{C_1^\mathcal{M}}(a) \leftrightarrow \val{C_2^\mathcal{M}}(a)) = \bigwedge_{x \in X} (\descr{C_2^\mathcal{M}}(x) \leftrightarrow \descr{C_1^\mathcal{M}}(x)) $.
    
    \noindent 1.~$v^\mathcal{M}(a:C) = \val{C}(a)$ and $v^\mathcal{M}(x::C) = \descr{C}(x)$. 
    
    \noindent 2.~$v^\mathcal{M}(I(a,x))= I^\mathcal{M}(a^\mathcal{M},x^\mathcal{M})$, $v^\mathcal{M}( R_\Box(a,x))= R_\Box^\mathcal{M}(a^\mathcal{M},x^\mathcal{M})$, and $v^\mathcal{M}(R_\Diamond(x,a))= R_\Diamond^\mathcal{M}(x^\mathcal{M},a^\mathcal{M})$.

 Let $\mathcal{M}$ be an interpretation, the {\em satisfiability relation} for $\mathcal{M}$ is defined as follows:

\noindent 1.~$\mathcal{M} \vDash C_1\equiv C_2$ iff $ C^\mathcal{M}_1= C^\mathcal{M}_2$. %and  $\mathcal{M} \vDash  C_1\equiv C_2 < \alpha $ iff  $v^\mathcal{M}( C_1\equiv C_2) < \alpha$.

\noindent 2.~For any ABox term $t$,  $\mathcal{M} \vDash \alpha \leq t$ iff $\alpha \leq v^\mathcal{M}(t)$ and  $\mathcal{M} \vDash  \alpha \not \leq t $ iff $\mathcal{M} \not\vDash \alpha \leq t$.

\noindent A {\em knowledge base} in $\mathrm{LE}$-$\mathcal{FALC}$ is a tuple $(\mathcal{A}, \mathcal{T})$, where $\mathcal{A}$ is an $\mathrm{LE}$-$\mathcal{ALC}$ ABox, and $\mathcal{T}$ is an $\mathrm{LE}$-$\mathcal{ALC}$ TBox. An interpretation $\mathcal{M}$ is a {\em model} for a knowledge base $(\mathcal{A}, \mathcal{T})$ if $\mathcal{M}\vDash \mathcal{A}$ and $\mathcal{M} \vDash \mathcal{T}$. A knowledge base $(\mathcal{A}, \mathcal{T})$ is {\em consistent} if there is a model for it. An ABox $\mathcal{A}$ (resp.~TBox $\mathcal{T}$) is consistent if there exists a model for knowledge base $(\mathcal{A},\emptyset)$ (resp.~$(\emptyset,\mathcal{T}$)).

\iffalse
contains two types of {\em individuals names}, which are interpreted as the objects and features of a given $\mathbf{H}$-valued enriched formal context. We let $\mathsf{OBJ}$ and $\mathsf{FEAT}$ be disjoint sets of individual names for objects and features, respectively.

We let $\mathcal{R}$ be the set of the {\em role names} of the language of $\mathrm{LE}$-$\mathcal{FALC}$, which is the union of three disjoint sets of $\mathbf{H}$-relations: (1) the singleton set  $\{I \mid I: \mathsf{OBJ} \times \mathsf{FEAT} \to \mathbf{H} \}$; (2) a set  $\mathcal{R}_\Box= \{ R_\Box: \mathsf{OBJ} \times \mathsf{FEAT} \to \mathbf{H} \mid \Box\in \mathcal{G}\}$; (3) a set  $\mathcal{R}_\Diamond= \{ R_\Diamond:  \mathsf{FEAT} \times \mathsf{OBJ} \to \mathbf{H}\mid \Diamond\in \mathcal{G}\}$. While $I$ is intended to be interpreted as the incidence relation of formal concepts, and the relations in $\mathcal{R}_\Box$  and $\mathcal{R}_\Diamond$ encode additional relationships between  objects and features (cf.~\cite{conradie2021rough} for an extended discussion).

Let $\mathcal{C}$ be a countable set of {\em primitive concepts}, the $\mathrm{LE}$-$\mathcal{FALC}$ {\em concepts} are defined as follows:

{{\centering
$C::=P\mid C_1\wedge C_2\mid C_1\vee C_2\mid [R_\Box] C\mid \langle R_\Diamond\rangle C$
\par}}

\noindent where $P\in\mathcal{C}$ is a primitive concept, $R_\Box\in\mathcal{R}_\Box$ and $R_\Diamond\in\mathcal{R}_\Diamond$ are role names.
\fi
\section{Examples of $\mathrm{LE}$-$\mathcal{FALC}$ knowledge bases}\label{sec:example of knowledge bases}

We now give two toy examples of  knowledge bases regarding categorization of scientific papers into different  categories based on keywords appearing in them  along with epistemic understanding  of resulting categories according to  an agent in the language  $\mathrm{LE}$-$\mathcal{FALC}$.  

\noindent The following table lists concepts appearing in the knowledge base.

{{
\par
\centering
 \begin{tabular}{|c|c|c|c|}
    \hline
        \textbf{concept name} & \textbf{symbol} & 
        \textbf{concept name} & \textbf{symbol}   \\
        \hline
          Chemistry papers & $C_1$ & Physics papers & $C_2$ \\
           Biology papers & $C_3$ & Natural science papers& $C_4$ \\
          Biochemistry papers  & $C_5$  & Multi-disciplinary physics papers & $C_6$ \\
        \hline
    \end{tabular}
    \par 
}}
\smallskip

\noindent This knowledge base is based on the following set of features:

{{
\par
\centering
 \begin{tabular}{|c|c|c|c|}
    \hline
        \textbf{feature} & \textbf{symbol} & \textbf{feature} & \textbf{symbol}   \\
        \hline
        keywords from chemistry & $y_1$ &   keywords from physics & $y_2$ \\
         keywords from biology& $y_3$  & keywords from biochemistry & $y_4$ \\
         
           \hline 
           \end{tabular}
           \par
}}
\smallskip
Let $\mathbf{H}$ be the Heyting algebra defined by a totally ordered set $\{0, 1/2, 1\}$. As $\mathbf{H}$ is linear, we use $t<\alpha$ to denote the ABox assertion $\alpha\not\leq t$ for any ABox term $t$ and $\alpha\in\mathbf{H}$. For any paper $P$ and a feature $y$, $I(P,y)= 0$ (resp.~$I(P,y)= 1/2$, resp.~$I(P,y)= 1$) if it contains very few or no (resp.~some, resp.~many) keywords  from the field corresponding to feature $y$.  For example, $I(P,y_1)= 1/2$ if  paper $P$ contains some (but not many) keywords from chemistry.
Let $R_\Box$  be a relation corresponding to an epistemic agent $i$ defined as follows: For any paper $P$, and a feature $y$, $R_\Box(P,y)= 0$ (resp.~$R_\Box(P,y)= 1/2$, resp.~$R_\Box(P,y)= 1$) if it contains very few  or no (resp.~some, resp.~many) keywords from the field corresponding to feature $y$ according to agent $i$.  The $I$-compatibility of relation $R_\Box$ is justified as explained in Section \ref{subsec:Many-valued polarity-based semantics}. For any two categories  $C_1$ and $C_2$, the categories $C_1 \vee C_2$ and $C_1 \wedge C_2$ denote their least common super-category, and greatest common sub-category, respectively. The category $[R_\Box] C_1$, denotes the formal concept generated by the (fuzzy) set of objects that agent $i$ believes to have all features to the extent specified by intension of  $C_1$ i.e.~$[R_\Box] C_1$ can be seen as a concept whose extension is perception of extension of $C_1$ (given the intension of $C_1$) according to agent $i$.

Let $\mathcal{K} = (\mathcal{A}, \mathcal{T})$ be a knowledge base such that  $\mathcal{T} =
\{ C_4  \equiv C_1 \vee  C_3,   C_2 \equiv   C_4 \wedge C_7, C_6 \equiv (C_1 \wedge C_2) \vee (C_2 \wedge C_3) 
\}$,  and $\mathcal{A}=\{1 \leq P_1:C_2, P_1:C_6< 1/2, 1 \leq y_1::[R_\Box] C_1,  1 \leq y_3::[R_\Box] C_3, P_2 R_\Box y_3<1,  1/2\leq P_2:C_1, 1/2 \leq y_1::C_1  \}$, where $C_7$ is a  new  fresh concept name. TBox axiom $C_4\equiv C_1 \vee  C_3$ means that the category of natural science papers is the smallest category (in the categorization system defined by given knowledge base) of papers containing both chemistry and biology papers. TBox axiom $C_2\equiv C_4 \wedge C_7$ states that the physics papers are natural science papers. 
 TBox axiom  $C_6  \equiv (C_1 \wedge C_2) \vee (C_2 \wedge C_3) $ states that multi-disciplinary physics papers is the smallest category containing categories of papers which are considered both physics and chemistry papers and physics and biology papers.
 
 ABox assertion $1 \leq P_1:C_2 $ (resp.~$1/2 \leq P_2:C_1 $) states that  paper $P_1$ (resp.~$P_2$) is a physics (resp.~chemistry) paper to the extent $1$ (resp.~$1/2$). ABox assertion $P_1:C_1 < 1/2$ (resp.~$P_1:C_6 < 1/2$) state that  paper $P_1$ is a chemistry (resp.~multi-disciplinary physics) paper to extent $0$. ABox assertion $1 \leq y_1::[R_\Box] C_1$, 
 (resp.~$1 \leq y_3::[R_\Box] C_3$) states that any paper which is Chemistry (resp.~biology) paper according to agent $i$  must have many keywords from chemistry (resp.~biology). The ABox assertion $ P_2 R_\Box y_3<1$ states that paper $P_2$ does not have many physics keywords according to agent $i$. ABox assertion $1/2 \leq  y_1::C_1 $ states that having at least some keywords from Chemistry is in intension (defining features) of category of Chemistry papers. Note that the assertions $ C_2  \equiv   C_4 \wedge C_7$, $C_6  \equiv (C_1 \wedge C_2) \vee (C_2 \wedge C_3) $, $1 \leq P_1:C_2 $, and 
  $P_1:C_6 < 1/2$ together imply that any model of $\mathcal{K}$ must be non-distributive.

For another example, consider the knowledge base $\mathcal{K}'=(\mathcal{A}', \mathcal{T}')$ such that $\mathcal{T}'=\{C_5 \equiv C_1 \wedge C_3\}$ and $\mathcal{A}'=\{1/2 \leq P_3:[R_\Box] C_1, 1 \leq P_3:[R_\Box] C_3,  P_3 R_\Box y_4 <1/2,  1/2 \leq y_4:: C_5\}$. We can provide interpretations for different axioms of $\mathcal{K}'$ similar to the interpretations of axioms in  $\mathcal{K}$  discussed above.

\section{Tableaux algorithm for checking consistency of ABoxes}\label{Sec: tableau}

In this section, we introduce the Algorithm \ref{alg:main algo} which is modified from the Algorithm \ref{alg:LE-ALC} to check the consistency of $\mathrm{LE}$-$\mathcal{FALC}$ ABoxes. We say that $\mathrm{LE}$-$\mathcal{FALC}$ ABox $\mathcal{A}$ contains a {\em clash} if there exist  $\alpha_1 \leq t$ and $ \alpha_2 \not \leq t$ in  $\mathcal{A}$ with $\alpha_2 \leq \alpha_1$ for some ABox term $t$.
The set $sub(C)$ of the sub-formulas of any $\mathrm{LE}$-$\mathcal{FALC}$ concept $C$ is defined as usual. A concept $C'$ {\em occurs} in $\mathcal{A}$, write as $C' \in \mathcal{A}$, if $C'\in sub(C)$ for some $C$ such that  there is an ABox assertion in $\mathcal{A}$ which contains $C$. 
 A constant $b$  (resp.~$y$) {\em occurs} in $\mathcal{A}$, in symbols,  $b \in \mathcal{A}$ (resp.~$y \in \mathcal{A}$), if there is an ABox assertion in $\mathcal{A}$ which contains $b$ (resp.~$y$).

For any concept $C \in \mathcal{A}$, the added constants $a_C$ and $x_C$ act as {\em classifying object} and {\em classifying feature} of concept $C$, in the sense that if $\mathcal{A}$ is consistent, then the tableaux algorithm will construct a model which satisfies $\val{C}(b) = \max \{ \alpha \mid \alpha \leq  I (b,x_C) \in  \overline{\mathcal{A}}\}$, and $ \descr{C}(y) = \max \{ \alpha \mid \alpha \leq  I (a_C ,y) \in  \overline{\mathcal{A}}\}$ for any object $b$, and feature $y$ in the model.  The set of tableaux expansion rules for   $\mathrm{LE}$-$\mathcal{FALC}$ is listed below. The commas in each rule are  meta-linguistic conjunctions, hence every tableau is non-branching. 

\smallskip

{{\footnotesize
\centering
\begin{tabular}{cc} 
\mc{1}{c}{\textbf{Creation rule}} & \mc{1}{c}{\textbf{Basic rule}} \\
\AXC{For any $C \in \mathcal{A}$}
\RL{\fns create}
\UIC{$1 \leq a_C:C$, \quad $1 \leq x_C::C$}
\DP
 \ & \ 
\rule[-1.85mm]{0mm}{8mm}
\AXC{$\alpha_1 \leq b:C, \quad \alpha_2 \leq y::C$}
\LL{\fns $I$}
\UIC{$\alpha_1 \wedge \alpha_2 \leq I(b,y)$}
\DP 

%\end{tabular}
%\quad\quad\quad\quad\quad
%\begin{tabular}{rl}
%\mc{2}{c}{\textbf{Rules for $\top$ and $\bot$}}  \\
%\rule[-1.85mm]{0mm}{8mm}

\end{tabular}

\begin{tabular}{cccc}
\mc{2}{c}{\textbf{Rules for the logical connectives}} & \\
\rule[-1.85mm]{0mm}{8mm}
 \AXC{$\alpha \leq b:C_1 \wedge  C_2$}
\RL{\fns $\wedge_A$}
\UIC{$\alpha \leq  b:C_1,$ \quad $\alpha \leq  b:C_2$}
\DP 
\ & \

\AXC{$\alpha \leq y::C_1 \vee  C_2$}
\LL{\fns $\vee_X$}
\UIC{$\alpha \leq y::C_1,$ \quad $\alpha \leq  y::C_2$}
\DP 
\  &  \

\\[3mm]

 \AXC{$\alpha_1 \leq  b:[R_\Box]C,$ \quad  $\alpha_2 \leq  y::C$}
\LL{\fns $\Box$}
\UIC{$\alpha_1 \wedge \alpha_2  \leq R_\Box (b,y)$}
\DP 
\ & \ 
  \AXC{$\alpha_1 \leq  y::\langle  R_\Diamond \rangle C,$ \quad  $\alpha_2 \leq  b:C$}
\RL{\fns $\Diamond$}
\UIC{$\alpha_1 \wedge \alpha_2 \leq   R_\Diamond (y,b)$}
\DP 
\\[3 mm]
\end{tabular}

\begin{tabular}{cccc}
 \mc{3}{c}{$I$-\textbf{compatibility rules}} \\
 \AXC{$\alpha \leq  I(b,\Box y)$}
\LL{\fns $\Box y$}
\UIC {$\alpha \leq   R_\Box (b,y) $}
\DP
\ & \
\AXC{$\alpha \leq I( b, \blacksquare y)$}
\RL{\fns $\blacksquare y$}
\UIC {$\alpha \leq   R_\Diamond (y,b)$}
\DP
\ & \
\AXC{$\alpha \leq  I(\Diamond b ,y)$}
\LL{\fns $\Diamond b$}
\UIC {$\alpha \leq R_\Diamond (y,b)$}
\DP
\ & \
\AXC{$\alpha \leq I(\Diamondblack b, y)$}
\RL{\fns $\Diamondblack b$}
\UIC {$c R_\Box (b,y)$}
\DP
\\
\end{tabular}

\begin{tabular}{rl}
\mc{2}{c}{\textbf{inverse rule for connectives}}  \\

   \AXC{$\alpha_1 \leq b:C_1$, $\alpha_2 \leq b:C_2$; where $C_1 \wedge C_2 \in \mathcal{A}$}
\LL{\fns $\wedge_A^{-1}$}
\UIC {$\alpha_1 \wedge \alpha_2 \leq b:C_1 \wedge C_2$}
\DP
\ & \
\AXC{$\alpha_1 \leq y::C_1$, $\alpha_2 \leq y::C_2$; where $C_1 \vee C_2 \in \mathcal{A}$}
\RL{\fns $\vee_X^{-1}$}
\UIC {$\alpha_1 \wedge \alpha_2 \leq  y::C_1 \vee C_2$}
\DP
\end{tabular}

\begin{tabular}{rl}
\mc{2}{c}{\textbf{Adjunction rules}}  \\
\rule[-1.85mm]{0mm}{8mm}
\AXC{$ \alpha \leq R_\Box (b,y)$}
\LL{\fns $R_\Box$}
\UIC{$\alpha \leq I(\Diamondblack b, y)$, \quad  $\alpha \leq I(b,\Box y)$}
\DP 
\ & \ 
\AXC{$\alpha \leq R_\Diamond (y,b)$}
\RL{\fns $R_\Diamond$}
\UIC{$\alpha \leq I(\Diamond b,y),$ \quad  $\alpha \leq I(b, \blacksquare y)$}
\DP 
\\ [2mm]
\end{tabular}

\begin{tabular}{cccc}
\mc{2}{c}{\textbf{Basic rules for  negative assertions}} & \mc{2}{c}{\textbf{Appending  rules}} \\
\rule[-1.85mm]{0mm}{8mm}
\AXC{$  \alpha \not \leq (b:C)  $}
\LL{\fns $-a_C$}
\UIC{$ \alpha \not \leq I(b,x_C)$}
\DP 
\ & \ 
\AXC{$\alpha \not \leq   (y::C)$}
\LL{\fns $-x_C$}
\UIC{$\alpha \not \leq   I(a_C, y) $}
\DP 
\ & \ 
\AXC{$\alpha \leq I(b,x_C)$}
\LL{\fns $x_C$}
\UIC{$\alpha \leq b:C$}
\DP 
\ & \ 
\AXC{$\alpha \leq I(a_C,y)$}
\RL{\fns $a_C$}
\UIC{$\alpha \leq y::C$}
\DP 
\\
\end{tabular}
\par}}
\smallskip

{{\centering
\begin{tabular}{cc}
\mc{2}{c}{\textbf{Many-valued algebra rules}}  \\
\AXC{$\alpha_1 \leq  t$, \quad $\alpha_2 \leq  t$}
\LL{\fns $MV_{\vee}$}
\UIC{$\alpha_1 \vee \alpha_2 \leq t$}
\DP   

\end{tabular}

\par}}
\smallskip 

\noindent In the adjunction rules the individual names $\Diamondblack b$, $\Diamond b$, $\Box y$,  and $\blacksquare y$ must be new and unique for each relation $R_\Box$ and $R_\Diamond$, except for  $\Diamond a_C= a_{\Diamond C}$ and 
$\Box x_C= x_{\Box C}$\footnote{The new individual names $\Diamondblack b$, $\Diamond b$, $\Box y$,   and $\blacksquare y$ appearing in tableaux expansion are purely syntactic entities.  Intuitively, they correspond to the classifying objects (resp.~features) of the concepts $\Diamondblack \textbf{b}$, $\Diamond \textbf{b}$ (resp.~$\Box \textbf{y}$, resp.~$\blacksquare \textbf{y}$),  where $\mathbf{b}=(b^{\uparrow\downarrow}, b^\uparrow)$ (resp.~$\mathbf{y}=(y^\downarrow, y^{\downarrow\uparrow})$) is the concept generated by $b$ (resp.~$y$),  and  the operation  $\Diamondblack$ (resp.$\blacksquare$) is the left (resp.~right) adjoint of operation $\Box$ (resp.~$\Diamond$).}. Side conditions for rules $\wedge_A^{-1}$, and $\vee_X^{-1}$ ensure we do not add new  joins or meets to concept names.

\begin{algorithm} 
\caption{tableaux algorithm for checking $\mathrm{LE}$-$\mathcal{FALC}$ ABox consistency }\label{alg:main algo}
    \hspace*{\algorithmicindent} \textbf{Input}: An $\mathrm{LE}$-$\mathcal{FALC}$ ABox $\mathcal{A}$. \quad \textbf{Output}: whether $\mathcal{A}$ is inconsistent. 
    \begin{algorithmic}[1]
        %\Procedure{Expand}{$\mathcal{A}$}
        \State \textbf{if} there is a clash in $\mathcal{A}$ \textbf{then} \textbf{return} ``inconsistent''.
        \State \textbf{pick} any applicable expansion rule $R$, \textbf{apply} $R$ to $\mathcal{A}$ and proceed recursively.
      \State \textbf{if}  no expansion  rule is applicable   \textbf{return} ``consistent''.
        % expanded ABox $\mathcal{A}'$ by applying that rule. Return \textsc{EXPAND}$(\mathcal{A}')$. 
       % \EndProcedure
    \end{algorithmic}
\end{algorithm}

\iffalse 
It is easy to check that the following rules are derivable in the calculus:

{{\centering
\begin{tabular}{cc}
\mc{2}{c}{
\AXC{$\alpha_1  \leq b:C_1 \vee C_2,$ \quad  $\alpha_2  \leq y::C_1,$ \quad $\alpha_3  \leq y::C_2$}
\LL{\fns $\vee_A$}
\UIC{$ \alpha_1 \wedge \alpha_2 \wedge \alpha_3 \leq b I y$}
\DP 
}
\\[3mm]
\mc{2}{c}{
 \AXC{$\alpha_1 \leq y::C_1 \wedge  C_2,$ \quad  $\alpha_2 \leq 
 b:C_1,$ \quad $\alpha_3 \leq b:C_2$}
\RL{\fns $\wedge_X$}
\UIC{$ \alpha_1 \wedge \alpha_2 \wedge \alpha_3 \leq b I y$}
\DP 
}
\\[3mm]
\AXC{$ \alpha \leq  \Diamondblack b:C$}
\LL{\fns $adj_\Box$}
\UIC{$ \alpha \leq   b: [R_\Box] C$}
\DP 
\ & \ 
\AXC{$ \alpha \leq  \blacksquare y::C$}
\RL{\fns $adj_\Diamond$}
\UIC{$ \alpha \leq   y::\langle R_\Diamond \rangle  C$}
\DP 
\\[2mm]
\end{tabular}
\par
}}
\fi

From the shape of the expansion rules, it is clear that the new terms are added by $\mathrm{LE}$-$\mathcal{FALC}$  expansion rules in a manner identical to  $\mathrm{LE}$-$\mathcal{ALC}$ expansion rules except for the rule $MV_\vee$.  The rule $MV_\vee$ at most adds one term for two terms already present in the tableau. Hence, this rule can only increase the size of tableau linearly.  Moreover, application of any of the tableaux rules only involves application of a fixed (finite) number of $\mathbf{H}$ operations. Hence, the polynomial time termination result for $\mathrm{LE}$-$\mathcal{FALC}$ tableaux algorithm  follows  from the termination result for  $\mathrm{LE}$-$\mathcal{ALC}$ tableaux algorithm (See \cite[Section 4.1] {van2023non}, for more details).

\begin{theorem}[Termination]\label{thm:termination}
For any ABox $\mathcal{A}$, the tableaux algorithm \ref{alg:main algo} terminates in a finite number of steps which is polynomial in $size(\mathcal{A})$.
\end{theorem}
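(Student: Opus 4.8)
The plan is to reduce the termination claim for Algorithm~\ref{alg:main algo} to the already-established termination result for the crisp algorithm, Theorem~\ref{thm:ABox LE-ALC}, by tracking the two features that distinguish the fuzzy calculus: (i) assertions now carry a value $\alpha\in\mathbf{H}$, so several fuzzy assertions may share the same underlying ABox term, and (ii) there is one extra rule, $MV_\vee$, with no crisp analogue. First I would fix notation for a \emph{run} of the algorithm and define the \emph{underlying crisp skeleton} of an $\mathrm{LE}$-$\mathcal{FALC}$ ABox: forget the annotations $\alpha$ in $\alpha\le t$ and $\alpha\not\le t$, leaving the set of bare ABox terms $t$ (and their negations). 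The key observation is that every fuzzy rule except $MV_\vee$ acts on the skeleton exactly as a corresponding $\mathrm{LE}$-$\mathcal{ALC}$ rule does: it introduces the same new individual names (with the same freshness/identification conventions $\Diamond a_C=a_{\Diamond C}$, $\Box x_C=x_{\Box C}$) and the same new ABox terms, and it only manipulates $\mathbf{H}$-values by applying a fixed finite number of the operations $\wedge,\vee,\to$ of $\mathbf{H}$. The rule $MV_\vee$ leaves the skeleton unchanged except possibly by contributing the term $t$ that already occurs (with two different annotations) — so it never creates a new individual name or a new ABox term.

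\smallskip

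\noindent The second step is to bound the number of \emph{distinct ABox terms} ever produced. Since each such term appears already in the crisp run on the skeleton of the input, and since that run terminates with a set of terms of size polynomial in $size(\mathcal{A})$ by Theorem~\ref{thm:ABox LE-ALC} (the set of individual names is bounded because the $\Diamond$/$\Box$ operators are interpreted with universal quantifiers, which is precisely feature~(2) highlighted in the introduction), there is a polynomial bound $N=N(size(\mathcal{A}))$ on the total number of ABox terms that can occur across the whole fuzzy run. The third step bounds, for each fixed term $t$, how many \emph{distinct annotated assertions} about $t$ can be generated. Here I would use two facts about $\mathbf{H}$: it is a fixed finite Heyting algebra in the intended applications (and in general the relevant value-chain is finite along any run), and the rule $MV_\vee$ is the only source of new values not already appearing in the input or produced by a bounded computation from existing ones; moreover $MV_\vee$ takes two assertions $\alpha_1\le t$, $\alpha_2\le t$ to $\alpha_1\vee\alpha_2\le t$, i.e.\ it computes joins inside the (bounded) sub-$\vee$-semilattice of $\mathbf{H}$ generated by the finitely many values in play. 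Hence for each term $t$ only finitely many — in fact boundedly many — assertions $\alpha\le t$ can be added, so $MV_\vee$ increases the tableau size only linearly in the number of already-present pairs of same-term assertions, as the paper asserts.

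\smallskip

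\noindent Finally I would assemble these counts: the total number of assertions is at most $N$ (terms) times a constant (annotations per term), i.e.\ polynomial in $size(\mathcal{A})$; each rule application either adds a new assertion or is blocked because its conclusion is already present, so the number of rule applications is polynomial; and each application costs a fixed finite number of $\mathbf{H}$-operations, hence constant time. This yields a polynomial bound on the number of steps. The step I expect to be the main obstacle is the third one: making precise the claim that only boundedly many distinct $\mathbf{H}$-values can arise. For a fixed finite $\mathbf{H}$ this is immediate, but to justify the footnote's promised generality (arbitrary complete frame-distributive residuated lattices) one must argue that the values occurring in a run all lie in the finite sub-algebra generated by the finitely many annotations in the input together with the finitely many distinguished elements used by the rules — and that the rules never force an infinite descending or ascending chain of values on a single term. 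I would handle this by an invariant: at every stage, every annotation occurring in the tableau belongs to a fixed finitely generated sub-$\{\wedge,\vee,\to\}$-subalgebra of $\mathbf{H}$ determined by the input, which is finite since $\mathbf{H}$ is (in the paper's setting) the finite Heyting algebra $\mathbf{H}$; the $MV_\vee$ rule, being idempotent and commutative on that finite join-semilattice, can fire only finitely often per term. With this invariant in hand the reduction to Theorem~\ref{thm:ABox LE-ALC} goes through cleanly.
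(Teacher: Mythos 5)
Your proposal follows essentially the same route as the paper: the paper's own argument is exactly the reduction to the $\mathrm{LE}$-$\mathcal{ALC}$ termination result (Theorem~\ref{thm:ABox LE-ALC}), observing that every rule except $MV_\vee$ introduces new individuals and terms precisely as its crisp counterpart does, that $MV_\vee$ only adds an assertion for a pair already present on the same underlying term, and that each rule application costs a fixed number of $\mathbf{H}$-operations. Your third step—bounding the number of distinct annotations per term—is the one point the paper leaves implicit, and your finitely-generated-subalgebra invariant is a sensible way to discharge it; just note that the paper's $\mathbf{H}$ is not assumed finite, so one should appeal to the finiteness of the $\{\wedge,\vee\}$-sublattice of a distributive lattice generated by the finitely many values occurring in the input (and argue, or assume, that this count is suitably bounded) rather than to finiteness of $\mathbf{H}$ itself.
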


Similarly to $\mathrm{LE}$-$\mathcal{ALC}$ (see Remark \ref{acyclic TBoxes}),  Algorithm \ref{alg:main algo} can be extended to acyclic TBoxes (exponential-time) via  unraveling. In the following sections, we prove the completeness and soundness for the Algorithm \ref{alg:main algo}.

\section{Soundness of the tableaux algorithm}\label{sec:soundness}

 For any consistent ABox $\mathcal{A}$, we let its {\em completion} $\overline{\mathcal{A}}$ be its maximal expansion (which exists due to termination). 
 If there is no clash in $\overline{\mathcal{A}}$, we construct a tuple $\mathcal{M}=(\mathfrak{F},\cdot^\mathcal{M})$ as follows:

 \noindent 1.~$A$ and $X$ are the sets of individual names of objects and features occurring in $\overline{\mathcal{A}}$, respectively. 
 
\noindent 2.~For any $a \in A$, $x \in X$, and any role names $R_\Box \in \mathcal{R}_\Box$, $R_\Diamond \in \mathcal{R}_\Diamond$
we have the values of the maps $I(a,x)$, $R_\Box(a,x)$, $R_\Diamond (a,x)$ are the maximum $\alpha_i$ such that ABox assertions $\alpha_i \leq I(a,x)$, $\alpha_i \leq R_\Box(a,x)$, $\alpha_i \leq R_\Diamond (a,x)$ explicitly occur in  $\overline{\mathcal{A}}$. 

\noindent 3.~Let $\mathfrak{F}= (A,X,I, \mathcal{R}_\Box, \mathcal{R}_\Diamond)$ be the tuple obtained in this manner. 

\noindent 4.~We  add a new element $x_\bot$ (resp.~$a_\top$) to $X$ (resp.~$A$) such that it is related to any element of $A$ (resp.~$X$) to extent $0$ w.r.t.~all the relations.

\noindent 5.~The map $\cdot^\mathcal{M}$ is defined as follows: For any individual name $a$ (resp.~$x$), we let $a^\mathcal{M}\coloneqq a$ (resp.~$x^\mathcal{M}\coloneqq x$). For any primitive concept $D\in\mathcal{A}$, we define $D^{\mathcal{M}}= ({\{1/x_D\}}^\downarrow, {\{1/a_D\}}^\uparrow)$.

Next, we show that $\mathcal{M}$ is a model for the ABox ${\mathcal{A}}$. To this end, we need to show that $\mathfrak{F}$ is an $\mathbf{H}$-valued enriched formal context, i.e.~that  $R_\Box$ and $R_\Diamond$ are $I$-compatible, and  $D^{\mathcal{M}}$ is a fuzzy formal concept in the complex algebra $\mathfrak{F}^+$ for every atomic concept $D$. The latter is shown in the next lemma, and the former in the subsequent one.
\begin{lemma}\label{lem:atomic concepts}
$\{1/x_D\}^{\downarrow\uparrow}=\{1/a_D\}^\uparrow$ and $\{1/a_D\}^{\uparrow\downarrow}=\{1/x_D\}^\downarrow$ for any primitive concept $D\in\mathcal{A}$.
\end{lemma}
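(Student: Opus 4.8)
The goal is to show that the pair $\{1/x_D\}^\downarrow, \{1/a_D\}^\uparrow$ is already a fuzzy formal concept, i.e.\ that $\{1/x_D\}$ and $\{1/a_D\}$ generate the same concept. The plan is to prove the two displayed equalities by mutual containment, using the Galois connection between $(\mathbf{H}^A,\subseteq)$ and $(\mathbf{H}^X,\subseteq)$ together with the tableaux rules that govern the classifying object $a_D$ and classifying feature $x_D$. Concretely, I expect the crucial facts to be: (i) the \textbf{create} rule gives $1\le a_D:D$ and $1\le x_D::D$ in $\overline{\mathcal A}$; (ii) the \textbf{basic rule} $I$ then forces $1\le I(a_D,x_D)$, so $\{1/a_D\}^\uparrow(x_D)=1$ and $\{1/x_D\}^\downarrow(a_D)=1$; (iii) the appending rules $x_C$ and $a_C$ together with the basic rule $I$ give, for every object $b$ and feature $y$ occurring in $\overline{\mathcal A}$, that $I(b,x_D)$ equals the value $\val{D^{\mathcal M}}(b)$ we want to assign, and dually $I(a_D,y)=\descr{D^{\mathcal M}}(y)$.

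The first step is to unwind the definitions: by Definition~\ref{def: A-polarity}, $\{1/x_D\}^\downarrow(a)=\bigwedge_{x\in X}(\{1/x_D\}(x)\rightarrow I(a,x)) = (1\rightarrow I(a,x_D)) = I(a,x_D)$, since $\{1/x_D\}(x)=0$ for $x\ne x_D$ and $0\rightarrow\gamma=1$. Dually $\{1/a_D\}^\uparrow(x)=I(a_D,x)$. So the two equalities to prove become, for all $a\in A$ and $x\in X$,
\begin{align*}
(\{1/x_D\}^\downarrow)^\uparrow(x) &= I(a_D,x), &
(\{1/a_D\}^\uparrow)^\downarrow(a) &= I(a,x_D).
\end{align*}
I will prove the first; the second is symmetric. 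For ``$\ge$'': since $\{1/x_D\}^\downarrow\subseteq\{1/x_D\}^\downarrow$, the Galois connection gives $\{1/x_D\}\subseteq(\{1/x_D\}^\downarrow)^\uparrow$ is not quite what is needed — instead I use that $a_D\in$ the support appropriately: because $1\le I(a_D,x_D)\in\overline{\mathcal A}$ (step (ii) above), we get $\{1/x_D\}^\downarrow(a_D)=I(a_D,x_D)=1$, hence $\{1/a_D\}\subseteq\{1/x_D\}^\downarrow$, and applying $(\cdot)^\uparrow$ (antitone) yields $(\{1/x_D\}^\downarrow)^\uparrow\subseteq\{1/a_D\}^\uparrow$, i.e.\ $(\{1/x_D\}^\downarrow)^\uparrow(x)\le I(a_D,x)$. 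For ``$\le$'' in the other direction, i.e.\ $I(a_D,x)\le(\{1/x_D\}^\downarrow)^\uparrow(x)$: unfolding, this says $I(a_D,x)\le\bigwedge_{a}(\{1/x_D\}^\downarrow(a)\rightarrow I(a,x)) = \bigwedge_a(I(a,x_D)\rightarrow I(a,x))$, so it suffices to show $I(a_D,x)\wedge I(a,x_D)\le I(a,x)$ for every $a$ and $x$. This is exactly where the tableaux rules do the work: whenever $\gamma\le I(a,x_D)\in\overline{\mathcal A}$, the appending rule $x_C$ (with $C=D$) yields $\gamma\le a:D\in\overline{\mathcal A}$; and whenever $\delta\le I(a_D,x)\in\overline{\mathcal A}$, the appending rule $a_C$ yields $\delta\le x::D\in\overline{\mathcal A}$; then the basic rule $I$ applied to $\gamma\le a:D$ and $\delta\le x::D$ gives $\gamma\wedge\delta\le I(a,x)\in\overline{\mathcal A}$, so $\gamma\wedge\delta\le I(a,x)$ by construction of $\mathcal M$. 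Taking $\gamma=I(a,x_D)$ and $\delta=I(a_D,x)$ (these maxima are attained because $\overline{\mathcal A}$ is finite and closed under $MV_\vee$) yields the required inequality.

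The step I expect to be the main obstacle is the careful bookkeeping in the ``$\le$'' direction: one must be sure that the relevant ABox assertions $\gamma\le I(a,x_D)$ and $\delta\le I(a_D,x)$ with $\gamma,\delta$ equal to the \emph{maximum} witnessing values actually occur in $\overline{\mathcal A}$, so that the appending rules and then rule $I$ have genuinely fired on them in the completed tableau. This relies on $\overline{\mathcal A}$ being a maximal expansion (so every applicable rule has been applied) and on rule $MV_\vee$ guaranteeing that the join of any two witnessing values is again witnessed, so that ``the maximum $\alpha$ with $\alpha\le I(a,x)\in\overline{\mathcal A}$'' is well defined and itself occurs as an assertion. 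Once this is pinned down, the argument is a direct application of the Galois connection and residuation identities in $\mathbf H$ ($\gamma\le(\beta\rightarrow\gamma')$ iff $\gamma\wedge\beta\le\gamma'$), with no further subtlety. A final remark: one should also note $a_D$ and $x_D$ do occur in $\overline{\mathcal A}$ precisely because $D\in\mathcal A$ triggers the \textbf{create} rule, so $A$ and $X$ as defined in the model construction do contain them.
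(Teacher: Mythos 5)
Your proof is correct and follows essentially the same route as the paper's: both hinge on the create and $I$ rules giving $1\le I(a_D,x_D)\in\overline{\mathcal A}$ (which you exploit via antitonicity of the Galois connection where the paper instantiates the meet at $b=a_D$; the two are equivalent), and on the appending rules followed by rule $I$ for the reverse inclusion. The only, harmless, omission is the degenerate case where no assertion $\gamma\le I(a,x_D)$ occurs in $\overline{\mathcal A}$, so that $I(a,x_D)=0$ by construction and the required inequality $I(a_D,x)\wedge I(a,x_D)\le I(a,x)$ holds trivially without any rule having fired — the paper treats this case explicitly.
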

\begin{proof}
    We only prove the first equation here, and the second equation can be proved similarly. It sucffices to prove that for any $y\in X$, $\{1/x_D\}^{\downarrow\uparrow}(y)=\{1/a_D\}^\uparrow(y)$. Note that for any $y \in X$, and $b \in A$, 
\begin{equation}\label{eq:closure characteristic A}
\{1/x_D\}^{\downarrow\uparrow}(y)= \wedge_{b \in A} (I(b,x_D)\rightarrow I(b,y)),
\end{equation}
\begin{equation}\label{eq:closure characteristic X}
\{1/a_D\}^{\downarrow\uparrow}(b)= \wedge_{y\in X} (I(a_D,y) \rightarrow I(b,y)).
\end{equation}
By the creation rules, we always have $1 \leq a_D:D$ and $1 \leq x_D::D$ in $\overline{\mathcal{A}}$.  Then by rule $I$, we have  that $1 \leq  I(a_D, x_D)$ in $\overline{\mathcal{A}}$.  Then by construction and properties of $\mathbf{H}$, for any $y \in X$, $I(a_D,x_D)\rightarrow I(a_D,y) = I(a_D,y)$. Therefore, by Equation \eqref{eq:closure characteristic A}, we have $\{1/x_D\}^{\downarrow\uparrow}(y) \leq I(a_D,y) =\{1/a_D\}^\uparrow(y)$.

For the reverse direction, suppose $\alpha_1 \leq \{1/a_D\}(y)=  I(a_D,y)$. We need to show that $\alpha_1 \leq \{1/x_D\}^{\downarrow\uparrow}(y)$ for every $y$ occurring in $\overline{\mathcal{A}}$.  By Equation \eqref{eq:closure characteristic A}, it is enough to show that for every $b \in A$, $ \alpha_1  \leq I (b,x_D) \rightarrow I(b,y)$. By adjunction, this is equivalent to showing $ \alpha_1  \wedge  I (b,x_D) \leq  I(b,y)$
for every $b \in A$.    If term $b I x_D$ does not occur in $\overline{\mathcal{A}}$, by construction, we have $ I(b,x_D) =0$ which trivially implies the required condition. Suppose $ I(b,x_D) =\alpha_2 \neq 0$, then by construction $\alpha_2 \leq I(b,x_D) \in \overline{\mathcal{A}}$.   As $\alpha_1 \leq  I(a_D,y)$, by construction $\alpha_3 \leq  I(a_D,y) \in \overline{\mathcal{A}}$ for some  $\alpha_1 \leq \alpha_3$. 
Therefore, by rule $I$, we have $\alpha_2 \wedge \alpha_3 \leq  I(b,y) \in  \overline{\mathcal{A}}$. Then by construction, we have $\alpha_1 \wedge \alpha_2\leq \alpha_2 \wedge \alpha_3 \leq  I(b,y)$. Hence proved.
\end{proof}

\begin{lemma} \label{lem:Galois-stability} 
The relations $R_\Box \in \mathcal{R}_\Box$ and $R_\Diamond \in \mathcal{R}_\Diamond$ in $\mathfrak{F}= (\mathfrak{P}, \mathcal{R}_\Box, \mathcal{R}_\Diamond)$ are $I$-compatible.
\end{lemma}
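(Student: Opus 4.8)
The plan is to verify $I$-compatibility directly from Definition~\ref{def: A-frame}, i.e.\ to show that for every $b\in A$, $y\in X$, and $\alpha\in\mathbf{H}$, the four $\mathbf{H}$-subsets $R_{\Box}^{(0)}[\{\alpha/y\}]$, $R_{\Box}^{(1)}[\{\alpha/b\}]$, $R_{\Diamond}^{(0)}[\{\alpha/b\}]$ and $R_{\Diamond}^{(1)}[\{\alpha/y\}]$ are Galois-stable in $\mathfrak{F}^{+}$. By symmetry of the construction (the $R_\Diamond$ clauses are the mirror images of the $R_\Box$ clauses with the roles of objects and features swapped), it suffices to treat the two $R_\Box$ cases; I will spell out, say, $R_{\Box}^{(0)}[\{\alpha/y\}]$ and indicate that the other three are analogous. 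Concretely, writing $u=\{\alpha/y\}$, I must show $(R_{\Box}^{(0)}[u])^{\uparrow\downarrow}=R_{\Box}^{(0)}[u]$, equivalently that $R_{\Box}^{(0)}[u]=\{1/x_{C}\}^{\downarrow}$ for a suitable classifying feature — or, more robustly, that it is closed under the Galois closure computed via formulas \eqref{eq:closure characteristic A}--\eqref{eq:closure characteristic X} from the previous lemma.

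The mechanism I expect to use is exactly the one already exploited in Lemma~\ref{lem:atomic concepts}: the adjunction rules $R_\Box$ and the $I$-compatibility rules ($\Box y$, $\blacksquare y$, $\Diamond b$, $\Diamondblack b$) force the value $R_\Box(b,y)$ in the constructed model to agree with $I(\Diamondblack b, y)$ and with $I(b,\Box y)$, since each of $\alpha\le R_\Box(b,y)$, $\alpha\le I(\Diamondblack b,y)$, $\alpha\le I(b,\Box y)$ entails the other two in $\overline{\mathcal A}$ (using also the $MV_\vee$ rule to collect maxima). Thus $R_\Box^{(0)}[\{\alpha/y\}]$ can be rewritten in terms of the relation $I$ applied to the syntactic witnesses $\Box y$, and Galois-stability then reduces to the statement that intents/extents generated by singletons $\{1/x\}$ (resp.\ $\{1/a\}$) are Galois-stable — which is immediate from the general theory of the Galois connection $((\cdot)^\uparrow,(\cdot)^\downarrow)$ recalled after Definition~\ref{def: A-polarity}, since $u^{\downarrow\uparrow\downarrow}=u^{\downarrow}$ always holds. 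More precisely, for $\{\alpha/y\}$ with $\alpha\neq 0$ I would argue $R_\Box^{(0)}[\{\alpha/y\}]=\{1/\Box y\}^{\downarrow}$ up to the scaling by $\alpha$ (using that $\alpha\to R_\Box(b,y)=\alpha\to I(b,\Box y)$, and that in a completely distributive Heyting algebra these pointwise residuated subsets are still Galois-stable because the closure operators commute with the needed meets); the case $\alpha=0$ gives the constant-$1$ map, which is $(1^{\mathbf H^X})^{\downarrow\uparrow}$-stable trivially.

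The key steps, in order: (1) reduce to the four generator cases via the definition and a symmetry remark; (2) for $R_\Box^{(0)}[\{\alpha/y\}]$, use the adjunction rule $R_\Box$ together with the compatibility rule $\Box y$ and $MV_\vee$ to prove the model identity $R_\Box(b,y)=I(b,\Box y)$ for all $b,y$ (this requires checking that the rules have in fact been applied to saturation in $\overline{\mathcal A}$ whenever the premises are present, i.e.\ an ``if a premise assertion $\alpha\le\cdot$ is in $\overline{\mathcal A}$ then so is the conclusion'' bookkeeping lemma, analogous to the reasoning in Lemma~\ref{lem:atomic concepts}); (3) conclude $R_\Box^{(0)}[\{\alpha/y\}]=\bigwedge_{b}(\{\alpha/y\}(x)\to R_\Box(b,x))$ equals a scaled closed set and invoke Galois-stability of singleton-generated subsets; (4) handle $R_\Box^{(1)}[\{\alpha/b\}]$ using the $\Diamondblack b$/$\Diamondblack$ side of the adjunction in the same way, and then $R_\Diamond$ dually. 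The main obstacle will be step (2)/(3): getting the bookkeeping exactly right so that the maximum-value construction of $I$, $R_\Box$, $R_\Diamond$ in $\overline{\mathcal A}$ genuinely yields the pointwise equalities $R_\Box(b,y)=I(b,\Box y)=I(\Diamondblack b,y)$ (and not merely $\le$ in one direction), and then checking that residuation by a fixed $\alpha\in\mathbf H$ preserves Galois-stability — here the hypothesis that $\mathbf H$ is completely distributive is what makes the relevant meets commute with the closure, so that $\bigl(\alpha\to(\cdot)\bigr)$ applied to a closed set stays closed. Once those are in place, the lemma follows by the same pattern as Lemma~\ref{lem:atomic concepts}.
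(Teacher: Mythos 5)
Your core mechanism is the paper's: the $\Box y$ and $R_\Box$ rules force, for every $\beta$, that $\beta\leq I(b,\Box y)$ occurs in $\overline{\mathcal{A}}$ iff $\beta\leq R_\Box(b,y)$ does, so the max-construction gives $R_\Box(b,y)=I(b,\Box y)$ pointwise and hence $R_\Box^{(0)}[\{\alpha/y\}]=I^{(0)}[\{\alpha/\Box y\}]$. But you then make the final step harder than it needs to be. There is no ``scaling by $\alpha$'' to worry about and no role for complete distributivity here: $I^{(0)}[\{\alpha/\Box y\}]=\{\alpha/\Box y\}^{\downarrow}$ is the image of an $\mathbf{H}$-subset under one leg of the Galois connection, and every such image is Galois-stable by the generic identity $u^{\downarrow\uparrow\downarrow}=u^{\downarrow}$. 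Your detour through the crisp singleton $\{1/\Box y\}$ followed by the claim that $\alpha\to(\cdot)$ preserves closedness happens to be salvageable (by residuation one has $\alpha\to u^{\downarrow}=(\alpha\wedge u)^{\downarrow}$; distributivity is not the reason), but it is entirely avoidable by keeping $\alpha$ inside the singleton from the start.

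The genuine gap is the unhandled case in which $\Box y$ does not occur in $\overline{\mathcal{A}}$ at all. The constant $\Box y$ is introduced only by the adjunction rule $R_\Box$, which fires only when some assertion $\alpha\leq R_\Box(b,y)$ is already present; if no such assertion ever appears, there is no $\Box y$ in the model and your identity $R_\Box(b,y)=I(b,\Box y)$ is vacuous rather than established. The paper treats this as a separate case: then $R_\Box(\cdot,y)$ is identically $0$ by the max-construction, so $R_\Box^{(0)}[\{\alpha/y\}]$ is a constant map, which is checked to be Galois-stable directly (the paper identifies it with $I^{(0)}[\{1/x_\bot\}]$ using the auxiliary feature $x_\bot$ that is $0$-related to everything). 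Add this case and replace the scaling argument with the direct observation above, and your plan coincides with the paper's proof.
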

\begin{proof}
  We prove that $ R_\Box^{(0)}[\{\alpha/y\}]$ is Galois-stable for every $y$ occurring in $\overline{\mathcal{A}}$,  and any $\alpha \in \mathbf{H}$. The proofs for other conditions are similar. We consider two cases: whether $\Box y\in\overline{\mathcal{A}}$ or not. 

When $\Box y\in\overline{\mathcal{A}}$, for any $\beta\in \mathbf{H}$ and $b\in A$, if $\beta\leq  I(b,\Box y)$ (resp.~$\beta \leq  R_\Box (b,y))$ occurs in $\overline{\mathcal{A}}$, then $\beta \leq  R_\Box (b,y)$ (resp.~$ \beta \leq  I (b,\Box y)$) occurs in $\overline{\mathcal{A}}$ by $\Box y$ (resp.~$R_\Box$) rule. Therefore, by construction $I(b, \Box y)=R_\Box(b, y)$ for any $b\in A$, which implies $ R_\Box^{(0)}[\{\alpha/y\}](b) = I^{(0)}[\{\alpha/\Box y\}](b)$ for any $b\in A$ and $\alpha \in \mathbf{H}$. Therefore, for any $\alpha \in \mathbf{H}$, $ R_\Box^{(0)}[\{\alpha/y\}] = I^{(0)}[\{\alpha/\Box y\}]$, and hence $ R_\Box^{(0)}[\{\alpha/y\}]$ is Galois-stable. When $\Box y$ does not occur in $\overline{\mathcal{A}}$, no term of form  $R_\Box (b, y)$ occurs in $\overline{\mathcal{A}}$. In such case, we have for any $\alpha$, $ R_\Box^{(0)}[\{\alpha/y\}] =I^{(0)}[\{1/x_\bot\}] =\emptyset$ is Galois-stable.  
\end{proof}
From the  above lemmas, it immediately follows that the tuple $\mathcal{M} =(\mathfrak{F}, \cdot^\mathcal{M})$ defined at the beginning of the present section is an interpretation for $\mathrm{LE}$-$\mathcal{FALC}$. The following lemma states that the interpretation of any concept $C$ in $\mathcal{M}$ is completely determined by the ABox terms of the form $\alpha \leq  I (b,x_C)$ and $\alpha \leq I (a_C,y)$ occurring in $\overline{\mathcal{A}}$.

\begin{lemma} \label{lem:Soundness pre}
Let  $\mathcal{M}=(\mathfrak{F}, \cdot^\mathcal{M})$ be the interpretation defined by the construction
above.  Then for any concept $C$ and individuals $b$, $y$ that occur in ${\mathcal{A}}$, we have $\val{C}(b)=\max\{ \alpha \mid \alpha \leq  I (b,x_C) \in  \overline{\mathcal{A}}\}$, and $\descr{C}(y) = \max \{ \alpha \mid \alpha \leq  I (a_C ,y) \in  \overline{\mathcal{A}}\}$.
\end{lemma}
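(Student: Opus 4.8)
The plan is to prove the two equations by simultaneous induction on the structure of the concept $C$, exactly mirroring the recursive definition of $\mathrm{LE}$-$\mathcal{FALC}$ concepts. Throughout, write $\val{C}(b)_{\overline{\mathcal{A}}} \coloneqq \max\{\alpha \mid \alpha \leq I(b,x_C) \in \overline{\mathcal{A}}\}$ and similarly $\descr{C}(y)_{\overline{\mathcal{A}}} \coloneqq \max\{\alpha \mid \alpha \leq I(a_C,y) \in \overline{\mathcal{A}}\}$ for the right-hand sides (these maxima exist and are well-defined because $\overline{\mathcal{A}}$ is finite and always contains $0 \leq I(b,x_C)$ vacuously, or by the $MV_\vee$ rule the set of witnessing $\alpha$ is closed under joins so has a maximum). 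The goal is to show $\val{C}(b) = \val{C}(b)_{\overline{\mathcal{A}}}$ and $\descr{C}(y) = \descr{C}(y)_{\overline{\mathcal{A}}}$.

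\smallskip
\noindent\textbf{Base case.} For a primitive concept $D$, we have $D^{\mathcal{M}} = (\{1/x_D\}^\downarrow, \{1/a_D\}^\uparrow)$ by definition, so $\val{D}(b) = \{1/x_D\}^\downarrow(b) = \bigwedge_{y \in X}(\{1/x_D\}(y) \to I(b,y)) = I(b,x_D)$, using that $\{1/x_D\}(y) = 1$ if $y = x_D$ and $0$ otherwise (and $0 \to I(b,y) = 1$). By construction (item 2 in the soundness section), $I(b,x_D) = \max\{\alpha \mid \alpha \leq I(b,x_D) \in \overline{\mathcal{A}}\}$, which is exactly $\val{D}(b)_{\overline{\mathcal{A}}}$. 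The argument for $\descr{D}(y)$ is symmetric, using Lemma~\ref{lem:atomic concepts} where needed to ensure well-definedness of the pair.

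\smallskip
\noindent\textbf{Inductive steps.} For $C = C_1 \wedge C_2$: on the algebraic side, $\val{C_1 \wedge C_2}(b) = (\val{C_1} \wedge \val{C_2})(b) = \val{C_1}(b) \wedge \val{C_2}(b)$, which by the induction hypothesis equals $\val{C_1}(b)_{\overline{\mathcal{A}}} \wedge \val{C_2}(b)_{\overline{\mathcal{A}}}$. To match this with $\val{C_1 \wedge C_2}(b)_{\overline{\mathcal{A}}}$, I would argue both inequalities: $(\geq)$ if $\alpha_i \leq b:C_i$ is in $\overline{\mathcal{A}}$ for $i = 1,2$, then since $C_1 \wedge C_2 \in \mathcal{A}$ the rule $\wedge_A^{-1}$ gives $\alpha_1 \wedge \alpha_2 \leq b:C_1 \wedge C_2$, then the creation rule gives $1 \leq x_{C_1 \wedge C_2}::C_1\wedge C_2$, and rule $I$ gives $\alpha_1 \wedge \alpha_2 \leq I(b,x_{C_1\wedge C_2})$; $(\leq)$ if $\alpha \leq b:C_1\wedge C_2 \in \overline{\mathcal{A}}$ (equivalently $\alpha \leq I(b,x_{C_1\wedge C_2})$ via rules $x_C$ and $-a_C$-reasoning) then rule $\wedge_A$ gives $\alpha \leq b:C_i$ for both $i$. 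The case $C = C_1 \vee C_2$ is handled dually (extension side uses the concept-lattice join formula $\val{C_1 \vee C_2} = (\descr{C_1}\wedge\descr{C_2})^\downarrow$, so one passes through the intension equation for $C_1, C_2$ and the $x_C/a_C$ appending rules), with rules $\vee_X$, $\vee_X^{-1}$ on the feature side. The modal cases $C = [R_\Box]C'$ and $C = \langle R_\Diamond\rangle C'$ use the $\Box$, $\Diamond$ rules together with the adjunction rules and the $I$-compatibility rules ($\Box y$, $\Diamondblack b$, etc.), exploiting that $\val{[R_\Box]C'}(b) = R_\Box^{(0)}[\descr{C'}](b) = \bigwedge_y(\descr{C'}(y) \to R_\Box(b,y))$, reducing to the intension equation for $C'$ plus the definition of $R_\Box$ in $\mathfrak{F}$, and the fact that $x_{\Box C'} = \Box x_{C'}$ and $R_\Box(b, y) = I(b, \Box y)$ whenever $\Box y$ occurs.

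\smallskip
\noindent\textbf{Main obstacle.} I expect the hardest part to be the modal cases, specifically checking that the infimum $\bigwedge_{y \in X}(\descr{C'}(y) \to R_\Box(b,y))$ computed in the constructed model $\mathfrak{F}$ really coincides with $\max\{\alpha \mid \alpha \leq I(b, x_{[R_\Box]C'}) \in \overline{\mathcal{A}}\}$. The $(\geq)$ direction requires showing that whatever $\alpha \leq R_\Box(b,y)$-type assertions are forced, the $\Box$ rule (triggered by the creation-rule assertion $1 \leq x_{[R_\Box]C'} :: [R_\Box]C'$ and the assertions $\alpha' \leq y::C'$ witnessing $\descr{C'}(y)$) together with $I$-compatibility propagates the right lower bound into $I(b, x_{[R_\Box]C'})$; the $(\leq)$ direction requires that no assertion in $\overline{\mathcal{A}}$ about $I(b, x_{[R_\Box]C'})$ can exceed the infimum, which comes from the appending rule $x_C$ converting $\alpha \leq I(b, x_{[R_\Box]C'})$ back into $\alpha \leq b:[R_\Box]C'$ and then the $\Box$ rule forcing $\alpha \wedge \alpha' \leq R_\Box(b,y)$ for every witness, hence $\alpha \leq \descr{C'}(y) \to R_\Box(b,y)$ after using the induction hypothesis $\descr{C'}(y) = \descr{C'}(y)_{\overline{\mathcal{A}}}$ and residuation in $\mathbf{H}$. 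Care is also needed for features $y$ not occurring in $\overline{\mathcal{A}}$ and for the auxiliary elements $x_\bot, a_\top$, where one checks the relevant implications evaluate to $1$ or the relations evaluate to $0$ so they do not spuriously lower the infimum. Finally, one should note the maximum on the right-hand side is attained because the $MV_\vee$ rule guarantees the witnessing set of degrees is join-closed, so "$\max$" is legitimate and equals the join taken in $\mathbf{H}$.
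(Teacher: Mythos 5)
Your proposal is correct and follows essentially the same route as the paper's proof: induction on the structure of $C$, establishing both inequalities for the extension equation via the creation, appending, $I$, $\wedge_A$/$\wedge_A^{-1}$ (resp.\ $\vee_X$/$\vee_X^{-1}$, $\Box$/$\Diamond$, adjunction and $I$-compatibility) rules, and then deriving the secondary component ($\descr{C_1\wedge C_2}$, $\val{C_1\vee C_2}$, etc.) by passing through the Galois-connection formula and the creation-rule instance $1\leq I(a_C,x_C)$, with the $MV_\vee$ rule guaranteeing the maxima exist. The difficulties you flag for the modal cases (the identification $x_{\Box C'}=\Box x_{C'}$, propagation through $R_\Box$ and the $I$-compatibility rules, and residuation against the induction hypothesis for $\descr{C'}$) are exactly the points the paper's proof works through.
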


\begin{proof}
    See Appendix \ref{Proof of lem:Soundness pre}.
\end{proof}

\begin{theorem}[Soundness]\label{thm:Soundness}
Let $\mathcal{A}$ be an $\mathrm{LE}$-$\mathcal{FALC}$ ABox and $\overline{\mathcal{A}}$ be its completion which does not contain any clash, then the interpretation $\mathcal{M}=(\mathfrak{F}, \cdot^\mathcal{M})$ defined above is a model for $\mathcal{A}$. 
\end{theorem}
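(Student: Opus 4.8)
The plan is to verify, clause by clause, that the interpretation $\mathcal{M}=(\mathfrak{F},\cdot^\mathcal{M})$ satisfies every ABox assertion in $\mathcal{A}$, using Lemma~\ref{lem:Soundness pre} as the main engine. Since $\mathcal{A}\subseteq\overline{\mathcal{A}}$ and the completion is clash-free, it suffices to show: (i) every positive assertion $\alpha\leq t$ in $\overline{\mathcal{A}}$ satisfies $\alpha\leq v^\mathcal{M}(t)$, and (ii) every negative assertion $\alpha\not\leq t$ in $\overline{\mathcal{A}}$ satisfies $\alpha\not\leq v^\mathcal{M}(t)$. For (ii) the argument is uniform: if $\alpha\leq v^\mathcal{M}(t)$ held, then by the construction of $\mathcal{M}$ (taking maxima of explicitly occurring bounds) and, in the concept-membership cases, by Lemma~\ref{lem:Soundness pre} together with the appending/negative-assertion rules ($x_C$, $a_C$, $-a_C$, $-x_C$), there would be a positive assertion $\beta\leq t\in\overline{\mathcal{A}}$ with $\alpha\leq\beta$, producing a clash with $\alpha\not\leq t$ — contradiction. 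So the real work is (i).

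For (i) I would split on the form of the term $t$. For relational terms $t\in\{I(b,y),R_\Box(b,y),R_\Diamond(y,b)\}$: by the construction of the maps in $\mathfrak{F}$, $v^\mathcal{M}(t)$ is exactly the maximum $\beta$ with $\beta\leq t\in\overline{\mathcal{A}}$, and closure under $MV_\vee$ guarantees this maximum is itself witnessed, so $\alpha\leq t\in\overline{\mathcal{A}}$ immediately gives $\alpha\leq v^\mathcal{M}(t)$ — modulo the newly introduced syntactic individuals $\Diamondblack b,\Diamond b,\Box y,\blacksquare y$ and the element $x_\bot$, which are handled by noting that the adjunction and $I$-compatibility rules keep the relevant values synchronized (this is essentially what was already checked inside the proof of Lemma~\ref{lem:Galois-stability}). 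For concept-membership terms $t=(b:C)$ or $t=(x::C)$: by Lemma~\ref{lem:Soundness pre}, $v^\mathcal{M}(b:C)=\val{C}(b)=\max\{\alpha\mid\alpha\leq I(b,x_C)\in\overline{\mathcal{A}}\}$, so I must show that $\alpha\leq b:C\in\overline{\mathcal{A}}$ forces $\alpha\leq I(b,x_C)\in\overline{\mathcal{A}}$; this follows by applying the creation rule (giving $1\leq x_C::C$) and then the $I$ rule to $\alpha\leq b:C$ and $1\leq x_C::C$, yielding $\alpha\wedge 1=\alpha\leq I(b,x_C)$. The symmetric argument with $a_C$ handles $x::C$.

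Finally I would check the TBox component is vacuous here: Theorem~\ref{thm:Soundness} only asserts $\mathcal{M}\vDash\mathcal{A}$ (the acyclic-TBox case is deferred to the unraveling remark), so no $C_1\equiv C_2$ clauses need to be verified, and one only needs that $\mathcal{M}$ is a genuine $\mathrm{LE}$-$\mathcal{FALC}$ interpretation — which is precisely the content of Lemmas~\ref{lem:atomic concepts} and~\ref{lem:Galois-stability} (together with the fact, established just before Lemma~\ref{lem:Soundness pre}, that $\mathcal{M}$ is an interpretation). The main obstacle I anticipate is bookkeeping around the auxiliary individuals $\Diamondblack b,\Diamond b,\Box y,\blacksquare y$ and the added $x_\bot,a_\top$: one must be sure that introducing them does not create an occurrence of $t$ with a strictly larger witnessed value than what the rules force, and that the identifications $\Diamond a_C=a_{\Diamond C}$, $\Box x_C=x_{\Box C}$ do not cause a concept's classifying-object characterization (Lemma~\ref{lem:Soundness pre}) to clash with the relational-term characterization — but since both characterizations are ``max of explicitly occurring bounds'' and the rules $\Box$, $\Diamond$, $R_\Box$, $R_\Diamond$, $\Box y$, $\blacksquare y$, $\Diamond b$, $\Diamondblack b$ are exactly designed to propagate these bounds consistently, this reduces to a routine (if tedious) case check.
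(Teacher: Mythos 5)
Your proposal is correct and follows essentially the same route as the paper's proof: a case split on the form of the ABox assertion, with positive relational terms handled directly by the construction of $\mathcal{M}$, positive concept-membership terms reduced via the creation and $I$ rules to $\alpha \leq I(b,x_C)$ and then settled by Lemma~\ref{lem:Soundness pre}, and both kinds of negative assertions discharged by deriving a clash from clash-freeness of $\overline{\mathcal{A}}$. Your version is in fact slightly more explicit than the paper's (which only says ``by expansion rules'') about which rules force $\alpha \leq I(b,x_C)$ into $\overline{\mathcal{A}}$.
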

\begin{proof}
We show that $\mathcal{M}$ is a model for $\overline{\mathcal{A}}$, so it is a model for $\mathcal{A}$. The proof is  by cases.

\noindent 1.~By construction, $\mathcal{M}$ satisfies all the ABox assertions of the form $\alpha \leq t$ in $\overline{\mathcal{A}}$, where $t$ is an ABox term of the form $I(a,x)$, $R_\Box(a,x)$ or $R_\Diamond(a,x)$. 

\noindent 2.~Let $\alpha \not \leq t$ be an ABox assertion in $\overline{\mathcal{A}}$, where $t$ is a relational term. By construction, $\mathcal{M}\not\vDash \alpha \not\leq t$ iff  $\alpha' \leq t \in \overline{\mathcal{A}}$ for some $\alpha \leq \alpha'$. However, if such ABox assertion $\alpha' \leq t$ occurs in $\overline{\mathcal{A}}$, it clashes with the  ABox assertion $\alpha \not \leq t$. Therefore, as $\overline{\mathcal{A}}$ contains no clash, we have $\mathcal{M}\vDash \alpha \not \leq t$.  

\noindent 3.~For the ABox assertion of the form $\alpha \leq b:C$ (resp.~$\alpha \leq  y::C$)  occurring in $\overline{\mathcal{A}}$, by expansion rules the ABox assertion of the form  $\alpha\leq I(b, x_C)$ (resp.~$\alpha \leq I(a_C, y)$) occurs in $\overline{\mathcal{A}}$. By Lemma \ref{lem:Soundness pre}, in the interpretation $\mathcal{M}$, $\alpha\leq\val{C}(b)$ (resp.~$\alpha\leq\descr{C}(y)$), which means that $\mathcal{M}\vDash\alpha\leq b:C$ (resp.~$\mathcal{M}\vDash\alpha\leq y::C$).

%By rule $MV_\vee$, there exists a term $\alpha'\leq I(b, x_C)$ (resp.~$\alpha' \leq I(a_C, y)$) occurs in $\overline{\mathcal{A}}$ for some $\alpha \leq \alpha'$. By Lemma \ref{lem:Soundness pre}, $\mathcal{M}\vDash\alpha\leq b:C$ (resp.~$\mathcal{M}\vDash\alpha\leq y::C$).
\noindent 4.~For ABox assertions of the form $\alpha \not\leq b:C$ (resp.~$\alpha\not\leq y::C$) occurring in $\overline{\mathcal{A}}$, by Lemma \ref{lem:Soundness pre}, $\mathcal{M}\not\vDash\alpha \not\leq b:C$ (resp.~$\mathcal{M}\not\vDash\alpha \not\leq  y::C$) iff some term of the form $\alpha' \leq  b:C$ (resp.~ $\alpha' \leq  y::C$) occurs in $\overline{\mathcal{A}}$ for some $\alpha \leq  \alpha'$, which implies $\overline{\mathcal{A}}$ contains a clash. Therefore, $\mathcal{M} \vDash\alpha \not\leq b:C$ (resp.~$\mathcal{M} \vDash\alpha \not\leq  y::C$) because $\overline{\mathcal{A}}$ contains no clash.
\end{proof}

The following corollary is an immediate consequence of the termination and soundness of the tableaux algorithm.
\begin{corollary}[Finite Model Property]
For any consistent $\mathrm{LE}$-$\mathcal{FALC}$ ABox $\mathcal{A}$, there exists a model of $\mathcal{A}$ of size polynomial in $size(\mathcal{A})$.
\end{corollary}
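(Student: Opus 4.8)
The plan is to take the model to be the one already produced in the proof of Theorem~\ref{thm:Soundness}, and merely to observe that its size is governed by the size of the completion $\overline{\mathcal{A}}$, which is polynomial by Theorem~\ref{thm:termination}. Concretely, given a consistent $\mathrm{LE}$-$\mathcal{FALC}$ ABox $\mathcal{A}$, I would run Algorithm~\ref{alg:main algo} on $\mathcal{A}$; by Theorem~\ref{thm:termination} it halts after a number of steps polynomial in $size(\mathcal{A})$, yielding a completion $\overline{\mathcal{A}}$ whose size is polynomial in $size(\mathcal{A})$. The first point to check is that $\overline{\mathcal{A}}$ contains no clash. For this one verifies that each expansion rule preserves satisfiability: whenever an interpretation satisfies the premises of a rule it also satisfies the conclusion. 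This is immediate from the semantic clauses for $\wedge,\vee,[R_\Box],\langle R_\Diamond\rangle$, from $I$-compatibility (for the $I$-compatibility rules), from the adjunction underlying the $\Diamondblack/\Box$ and $\Diamond/\blacksquare$ (adjunction) rules, and, for $MV_\vee$, from the fact that $\alpha_1\leq v^{\mathcal{N}}(t)$ and $\alpha_2\leq v^{\mathcal{N}}(t)$ imply $\alpha_1\vee\alpha_2\leq v^{\mathcal{N}}(t)$ in $\mathbf{H}$. Since $\mathcal{A}$ is satisfiable and an ABox containing a clash $\alpha_1\leq t,\ \alpha_2\not\leq t$ with $\alpha_2\leq\alpha_1$ is unsatisfiable, $\overline{\mathcal{A}}$ must be clash-free.

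With $\overline{\mathcal{A}}$ clash-free, Theorem~\ref{thm:Soundness} (via Lemmas~\ref{lem:atomic concepts}, \ref{lem:Galois-stability}, and~\ref{lem:Soundness pre}) applies and produces an interpretation $\mathcal{M}=(\mathfrak{F},\cdot^\mathcal{M})$ that is a model of $\mathcal{A}$. It then remains to bound $size(\mathcal{M})$. The carriers of $\mathfrak{F}=(A,X,I,\mathcal{R}_\Box,\mathcal{R}_\Diamond)$ are exactly the object- and feature-names occurring in $\overline{\mathcal{A}}$ together with the two fresh elements $a_\top,x_\bot$; since the number of individual names occurring in $\overline{\mathcal{A}}$ is at most $size(\overline{\mathcal{A}})$, the sets $A$ and $X$ have polynomial size. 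Each of the finitely many $\mathbf{H}$-relations $I,R_\Box,R_\Diamond$ assigns to a pair of individuals the maximal $\alpha$ occurring in a corresponding assertion of $\overline{\mathcal{A}}$ (and $0$ otherwise), and $\mathbf{H}$ is fixed, so each relation, and hence $\mathfrak{F}$, is describable in polynomial space; likewise each primitive concept $D$ is interpreted as the Galois-stable pair $(\{1/x_D\}^\downarrow,\{1/a_D\}^\uparrow)$, determined by $\mathfrak{F}$ and the constants $a_D,x_D$. Putting these bounds together, $size(\mathcal{M})$ is polynomial in $size(\mathcal{A})$.

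I do not anticipate a genuine obstacle: the corollary is a repackaging of Theorems~\ref{thm:termination} and~\ref{thm:Soundness}. The only points needing care are (i) the satisfiability-preservation of the expansion rules, which is what guarantees that a \emph{consistent} input expands to a clash-free completion (this is the one ingredient not read off verbatim from the two cited theorems, and it coincides with the rule soundness underpinning the soundness section), and (ii) being explicit that $size(\mathcal{M})$ is measured by a reasonable encoding of $\mathfrak{F}$ together with $\cdot^\mathcal{M}$, so that the polynomial bound on $size(\overline{\mathcal{A}})$ transfers to $\mathcal{M}$ — in particular noting that adjoining $a_\top$ and $x_\bot$ increases the size only by a constant.
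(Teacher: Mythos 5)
Your proposal is correct and follows essentially the same route as the paper: the model is the interpretation $\mathcal{M}$ constructed in the soundness section, and the polynomial size bound comes from the termination theorem. The only addition is that you explicitly justify clash-freeness of $\overline{\mathcal{A}}$ via satisfiability preservation of the rules (which is the content of the completeness theorem), a step the paper leaves implicit.
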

\begin{proof}
The interpretation $\mathcal{M}$ defined above is the required model. The polynomial bound on the size of $\mathcal{M}$ follows from the proof of Theorem \ref{thm:termination}.
\end{proof}

\section{Completeness of the tableaux algorithm}
\label{sec:Completeness}
In this section, we  prove the completeness of the tableaux algorithm given in Section \ref{Sec: tableau}. The following lemma is key to this end, since it shows that  every model for an $\mathrm{LE}$-$\mathcal{FALC}$ ABox can be extended to a model with classifying objects and features.

\begin{lemma}\label{lem:characteristic consistency} 

For any ABox $\mathcal{A}$, any model $\mathcal{M}=(\mathfrak{F}, \cdot^{\mathcal{M}})$ of $\mathcal{A}$ can be extended to a model $\mathcal{M}'=(\mathfrak{F}', \cdot^{\mathcal{M}'})$ of $\mathcal{A}$ such that $\mathfrak{F}'=(A',X',I',\{R_\Box'\}_{\Box\in\mathcal{G}}, \{R_\Diamond'\}_{\Diamond\in\mathcal{F}})$, $A \subseteq A'$ and $X \subseteq  X'$, and moreover for every $\Box \in \mathcal{G}$ and $\Diamond \in \mathcal{F}$:

1.~For any concept $C$, there exists $a_C \in A'$ and $x_C \in X'$  such that:
\begin{equation}\label{eq:completness 1}
   C^{\mathcal{M}'} =(I'^{(0)}[\{1/x_C\}], I'^{(1)}[\{1/a_C\}]), \quad \val{C^{\mathcal{M}'}}(a_C)=1, \quad   \descr{C^{\mathcal{M}'}}(x_C)=1,  
\end{equation}

 2.~For every individual $b \in A$, and $\alpha\in\mathbf{H}$, there exist $\Diamond b, \Diamondblack b \in A'$ such that:
\begin{equation}\label{eq:completness 2}
I'^{(1)}[\{\alpha/ \Diamondblack b\}] = R_\Box'^{(1)}[\{\alpha/ b^{\mathcal{M}'}\}] \quad \mbox{and} \quad I'^{(1)}[\{\alpha/\Diamond b\}] = R_\Diamond'^{(0)}[\{\alpha/b^{\mathcal{M}'} \}],
\end{equation}

3.~For every individual $y \in X$, and $\alpha\in\mathbf{H}$, there exist $\Box y, \blacksquare y \in X'$ such that:
\begin{equation}\label{eq:completness 3}
I'^{(0)}[\{ \alpha/ \blacksquare y\}] = R_\Diamond'^{(1)}[\{\alpha/ y^{\mathcal{M}'}\}] \quad \mbox{and} \quad I'^{(0)}[\{ \alpha/ \Box y\}] = R_\Box'^{(0)}[\{ \alpha/ y^{\mathcal{M}'}\}]. 
\end{equation}

4.~For any concept $C$, and any $a \in A$, $x \in X$, 
\begin{equation}\label{eq:completeness 4}
\val{C^\mathcal{M}}(a)= \val{C^{\mathcal{M}'}} (a) \quad \mbox{and} \quad  \descr{C^\mathcal{M}}(x)= \descr{C^{\mathcal{M}'}}(x).    
\end{equation}
\end{lemma}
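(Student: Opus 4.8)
The plan is to build $\mathcal{M}'$ from $\mathcal{M}$ by freely adjoining fresh objects and features indexed by the concepts occurring in $\mathcal{A}$ (one classifying pair $a_C, x_C$ per concept $C$) and by the individuals of $\mathcal{M}$ together with the elements of $\mathbf{H}$ (the $\Diamond$-, $\Diamondblack$-, $\Box$-, $\blacksquare$-copies), and then to define the extended incidence and modal relations so that the new elements "name" exactly the concepts $C^{\mathcal{M}}$, $\langle R_\Diamond\rangle$-images, $[R_\Box]$-images, etc. Concretely, I would set $A' = A \uplus \{a_C \mid C \in \mathcal{A}\} \uplus \{\Diamond b, \Diamondblack b \mid b \in A\}$ and dually for $X'$, identifying $\Diamond a_C$ with $a_{\langle R_\Diamond\rangle C}$ and $\Box x_C$ with $x_{[R_\Box] C}$ exactly as the side conditions of the tableaux rules demand (so that the construction is well-founded, one inducts on the complexity of $C$). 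The incidence $I'$ restricted to $A \times X$ is $I^{\mathcal{M}}$; for a new feature $x_C$ one puts $I'(b, x_C) := \val{C^{\mathcal{M}'}}(b)$ for old $b$ — but since $C^{\mathcal{M}'}$ is what we are trying to define, the honest way is to define $I'(b,x_C) := \val{C^{\mathcal{M}}}(b)$ when $b \in A$ and to define the values on pairs of new elements by the intended Galois-theoretic formulas, then verify consistency.

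The key steps, in order, would be: (1) Fix an enumeration of $sub$-closed set of concepts occurring in $\mathcal{A}$ and of the needed modal copies, and define $A', X', I', R_\Box', R_\Diamond'$ by recursion on concept complexity, using at each stage the already-constructed substructure; for the base case $D$ primitive, $a_D, x_D$ are fresh with $I'(b,x_D) = \val{D^{\mathcal M}}(b)$, $I'(a_D, y) = \descr{D^{\mathcal M}}(y)$, and the value $I'(a_D, x_D)$ and all cross-values among new elements chosen as the relevant $\bigwedge$ of implications so that $\{1/a_D\}^{\uparrow'}$, $\{1/x_D\}^{\downarrow'}$ are Galois-stable and form the concept $D^{\mathcal{M}'}$ whose restriction is $D^{\mathcal M}$. (2) Check $I'$-compatibility of $R_\Box', R_\Diamond'$ — this is where one uses that the modal copies $\Diamondblack b$, $\Box y$, etc., were added precisely to witness the $I$-compatibility conditions, mirroring the adjunction and $I$-compatibility rules of the tableaux; the compatibility of the original relations on $A \times X$ is inherited from $\mathcal{M}$. (3) Prove by induction on $C$ that $C^{\mathcal{M}'}$ is the concept $(I'^{(0)}[\{1/x_C\}], I'^{(1)}[\{1/a_C\}])$ with $\val{C^{\mathcal{M}'}}(a_C) = \descr{C^{\mathcal{M}'}}(x_C) = 1$ (clause 1), using the complex-algebra definitions of $\wedge, \vee, [R_\Box], \langle R_\Diamond\rangle$ and the identifications $\Diamond a_C = a_{\langle R_\Diamond\rangle C}$, $\Box x_C = x_{[R_\Box]C}$; simultaneously establish clauses 2 and 3 from the defining equations of $I'$ on the modal copies. (4) Prove the conservativity clause 4 — $\val{C^{\mathcal{M}}}(a) = \val{C^{\mathcal{M}'}}(a)$ for $a \in A$, $x \in X$ — again by induction on $C$, the point being that the fresh elements are related to old elements only in the "right" way so that restricting the Galois maps of $\mathfrak{F}'$ to the old carriers recovers those of $\mathfrak{F}$; and (5) conclude that $\mathcal{M}'$ still satisfies $\mathcal{A}$, which is immediate from clause 4 for concept assertions and from $I' \!\restriction\! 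A\times X = I^{\mathcal M}$ (and similarly for the modal relations) for relational assertions.

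The main obstacle I anticipate is step (2) together with the interlocking of steps (1)–(3): one must choose the values of $I'$ on pairs of \emph{new} elements (e.g. $I'(a_C, x_{C'})$, $I'(\Diamondblack b, x_C)$, $I'(\Diamond b, \Box y)$) carefully enough that (a) all the $R_\Box', R_\Diamond'$ remain $I'$-compatible, (b) the closure operators behave so that each $a_C$ genuinely generates $C^{\mathcal{M}'}$, and (c) no new identifications are forced among old concepts — and these requirements must be shown jointly consistent. The cleanest route is probably to define $I'$ on new pairs by the \emph{largest} assignment compatible with the intended Galois-stable extents/intents (i.e. take $I'(z, w)$ to be the meet, over the relevant witnessing family, of the implications dictated by Definition \ref{def: A-polarity} and Definition \ref{def: A-frame}), and then verify compatibility and the generation property by the same computations that appear in Lemma \ref{lem:atomic concepts} and Lemma \ref{lem:Galois-stability}. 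Everything else is a routine induction on concept complexity.
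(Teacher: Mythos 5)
Your proposal follows essentially the same route as the paper: adjoin classifying elements $a_C, x_C$, keep $I'$ equal to $I$ on old pairs, set $I'(b,x_C)=\val{C^{\mathcal{M}}}(b)$ and $I'(a_C,y)=\descr{C^{\mathcal{M}}}(y)$ (which is exactly the paper's $\bigwedge$-of-implications formula, since $\val{C^{\mathcal{M}}}=\descr{C^{\mathcal{M}}}^{\downarrow}$), define new--new pairs by the corresponding double meet of implications, and verify the four clauses by induction; you also correctly spot and resolve the circularity in defining $I'(b,x_C)$. The only organizational difference is that the paper does not introduce $\Diamondblack b$, $\Diamond b$, $\Box y$, $\blacksquare y$ as separate fresh elements: it identifies them outright with the classifying elements $a_{\Diamondblack(cl(b))}$, $a_{\Diamond(cl(b))}$, $x_{\Box(cl(y))}$, $x_{\blacksquare(cl(y))}$ of the concepts obtained by applying the (adjoint) modal operations of the complex algebra to the concept generated by $b$ or $y$, so that clauses 2 and 3 become instances of the clause-1 machinery rather than requiring separate defining equations.
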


\begin{proof}
Fix $\Box \in \mathcal{G}$ and $\Diamond \in \mathcal{F}$. Let $\mathcal{M}'=(\mathfrak{F}', \cdot^{\mathcal{M}'})$, where $\mathfrak{F}'=(A',X',I',\{R_\Box'\}_{\Box\in\mathcal{G}}, \{R_\Diamond'\}_{\Diamond\in\mathcal{F}})$, be defined as follows: For every concept $C$,  we add new elements $a_C$ and $x_C$ to $A$ and $X$ (respectively) to obtain the sets $A'$ and $X'$. 
For any  $J \in \{I, R_\Box\}$, $a \in A'$ and $x \in X'$, the value of $J'(a, x)$ is defined as follows:

\noindent 1.~If $a \in A$, $x \in X$, then  $J'(a, x)=J(a,x)$; 
    
\noindent 2.~If $x \in X$, and $a=a_C$ for some concept $C$, then  $J'(a,x)= \bigwedge_{b\in A} (\val{C^{\mathcal{M}}}(b) \rightarrow J (b,x)) $;
    
\noindent 3.~If $a \in A$, and $x=x_C$ for some concept $C$, then  $J (a,x)= \bigwedge_{y\in X} (\descr{C^{\mathcal{M}}}(y) \rightarrow J (a, y)) $;
    
\noindent 4.~If $a=a_{C_1}$ and $x=x_{C_2}$ for some concepts $C_1$, $C_2$, then   $J'(a,x)=\bigwedge_{b \in A} \bigwedge_{y \in X} ((\val{C_1^{\mathcal{M}}}(b)\wedge  \descr{C^{\mathcal{M}}}(y)) \rightarrow J(b,y))$.
    
For any $a \in A'$ and $x \in X'$, the value of $R_\Diamond'(a, x)$ is defined as follows:

\noindent 1.~If $a\in A$, $x \in X$, then  $R_\Diamond'(x,a)=R_\Diamond(x,a)$; 
    
\noindent 2.~If $x\in X$, and $a=a_C$ for some concept $C$, then  $R_\Diamond (x,a)= \bigwedge_{b\in A} (\val{C^{\mathcal{M}}}(b) \rightarrow R_\Diamond (x, b)) $;
    
\noindent 3.~If $a \in A$, and $x=x_C$ for some concept $C$, then  $R_\Diamond (x,a)= \bigwedge_{y\in X} (\descr{C^{\mathcal{M}}}(y) \rightarrow R_\Diamond (y,a)) $;
    
\noindent 4.~If $a=a_{C_1}$ and $x=x_{C_2}$ for some concepts $C_1$, $C_2$, then   $R_\Diamond (x,a)=\bigwedge_{b\in A} \bigwedge_{y \in X} ((\val{C_1^{\mathcal{M}}}(b) \wedge\descr{C^{\mathcal{M}}}(y)) \rightarrow R_\Diamond (y,b))$.

For any $b \in A$, $y \in X$, let $\Diamondblack b =a_{\Diamondblack(cl(b))}$, $\Diamond b =a_{\Diamond(cl(b))}$,  $\blacksquare y=x_{\blacksquare (cl(y))}$, and   $\Box  y=x_{\Box (cl(y))}$, where $cl(b)$ (resp.~$cl(y)$) is the smallest fuzzy formal concept generated by $\{1/b\}$ (resp.~$\{1/y\}$), and the operations $\Diamondblack$ and $\blacksquare$ are the adjoints of the operations $\Box$, and $\Diamond$, respectively. For any concept $C$, let $C^{\mathcal{M}'} =(I'^{(0)}[\{1/x_C\}], I'^{(1)}[\{1/a_C\}])$. 

It is straightforward to check that $\mathcal{M}'$ is a model for $\mathcal{A}$ and satisfies all the properties required in Lemma \ref{lem:characteristic consistency}. This concludes the proof.
%It is easy to check that the model $M'$ constructed in this manner with the constants as described above satisfies the required properties. 
\end{proof}

\begin{theorem}[Completeness]\label{thm:completeness}
    Let $\mathcal{A}$ be a consistent ABox and $\mathcal{A}'$ be obtained via the application of any expansion rule  applied to $\mathcal{A}$, then $\mathcal{A}'$ is also consistent. 
\end{theorem}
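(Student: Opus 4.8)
The plan is to prove completeness by a rule-by-rule case analysis, showing that each tableau expansion rule preserves consistency. The key conceptual tool is Lemma~\ref{lem:characteristic consistency}: given a consistent ABox $\mathcal{A}$ with model $\mathcal{M}$, we may pass to the enriched model $\mathcal{M}'$ which contains witnesses $a_C, x_C$ for every concept $C$ occurring in $\mathcal{A}$, together with the auxiliary individuals $\Diamondblack b$, $\Diamond b$, $\Box y$, $\blacksquare y$, and which satisfies the characteristic equations \eqref{eq:completness 1}--\eqref{eq:completeness 4}. So the first step is: assume $\mathcal{A}$ is consistent, fix a model, and replace it by such an $\mathcal{M}'$. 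Then for each rule I would verify that whenever the premises of the rule are satisfied by $\mathcal{M}'$ (under the assignment already fixing the individuals named in $\mathcal{A}$), the conclusion is satisfied as well, so that $\mathcal{M}'$ (suitably interpreting any new individual names introduced by the rule) remains a model of $\mathcal{A}'$.

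The routine cases are the ones that translate algebraic identities on $\mathbf{H}$ and on $\mathfrak{F}^+$ directly: for $\wedge_A$ and $\vee_X$, use that $\val{C_1\wedge C_2} = \val{C_1}\wedge\val{C_2}$ and $\descr{C_1\vee C_2}=\descr{C_1}\wedge\descr{C_2}$ pointwise, so $\alpha\le\val{C_1\wedge C_2}(b)$ gives $\alpha\le\val{C_1}(b)$ and $\alpha\le\val{C_2}(b)$; the inverse rules $\wedge_A^{-1}$, $\vee_X^{-1}$ use the same identities in the other direction, with the side condition $C_1\wedge C_2\in\mathcal{A}$ guaranteeing the concept is interpreted. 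For the rule $I$, use the Galois-connection inequality $f\subseteq u^\downarrow \iff u\subseteq f^\uparrow$ together with $\descr{C}=\val{C}^\uparrow$: from $\alpha_1\le\val{C}(b)$ and $\alpha_2\le\descr{C}(y)$ one gets $\alpha_1\wedge\alpha_2 \le I(b,y)$ by the definition of $(\cdot)^\uparrow$ as $I^{(1)}[-]$. The modal rules $\Box$, $\Diamond$ use the semantic clauses $\val{[R_\Box]C} = R_\Box^{(0)}[\descr{C}]$ and $\descr{\langle R_\Diamond\rangle C} = R_\Diamond^{(0)}[\val{C}]$ unfolded into the defining meet, exactly as in \eqref{eqn:Box:satisfaction:MV}. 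The $MV_\vee$ rule is immediate: $\alpha_1\le v^{\mathcal{M}'}(t)$ and $\alpha_2\le v^{\mathcal{M}'}(t)$ imply $\alpha_1\vee\alpha_2 \le v^{\mathcal{M}'}(t)$ since the order is a lattice. The basic negative rules $-a_C$, $-x_C$ and the appending rules $x_C$, $a_C$ are handled by the characteristic equation \eqref{eq:completness 1}, which says precisely that $\val{C^{\mathcal{M}'}}(b) = I'(b, x_C)$ and $\descr{C^{\mathcal{M}'}}(y) = I'(a_C, y)$; hence $\alpha\le b:C \iff \alpha\le I(b,x_C)$ and likewise, and the negated versions follow contrapositively.

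The cases requiring the full strength of Lemma~\ref{lem:characteristic consistency} are the adjunction rules $R_\Box$, $R_\Diamond$ and the $I$-compatibility rules $\Box y$, $\blacksquare y$, $\Diamond b$, $\Diamondblack b$, since these introduce or relate the purely syntactic individuals $\Diamondblack b$, $\Box y$, etc. Here I would interpret $\Diamondblack b$ as $a_{\Diamondblack(cl(b))}$ and $\Box y$ as $x_{\Box(cl(y))}$ as prescribed in the proof of the lemma, and then equations \eqref{eq:completness 2}--\eqref{eq:completness 3} give exactly that $I'^{(1)}[\{\alpha/\Diamondblack b\}] = R_\Box'^{(1)}[\{\alpha/b\}]$ and the three companion identities, which is precisely what the adjunction and $I$-compatibility rules assert (in the form $\alpha\le R_\Box(b,y) \iff \alpha\le I(\Diamondblack b, y)$ and $\alpha \le I(b,\Box y)$, etc.). The exceptional identifications $\Diamond a_C = a_{\Diamond C}$ and $\Box x_C = x_{\Box C}$ are consistent with this assignment because $cl(a_C)$ is the concept whose extent is $\{1/a_C\}^{\downarrow\uparrow}$, and by \eqref{eq:completness 1} $C^{\mathcal{M}'}$ already has that extent, so $\Diamond(cl(a_C)) = \langle R_\Diamond\rangle C^{\mathcal{M}'}$.

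\textbf{Main obstacle.} The delicate point is not any single rule but bookkeeping: one must be careful that the model $\mathcal{M}'$ produced by Lemma~\ref{lem:characteristic consistency} already contains \emph{all} the classifying individuals and adjunction-individuals that could ever be demanded by a single rule application to $\mathcal{A}$ — in particular, that when a rule fires on a concept $C$ that occurs in $\mathcal{A}$ but whose classifying constants were only introduced by the \emph{create} rule (so are present in $\mathcal{A}'$ but perhaps not literally in $\mathcal{A}$), the witnesses $a_C, x_C$ supplied by the lemma match up. The clean way around this is to apply Lemma~\ref{lem:characteristic consistency} to the concepts occurring in $\mathcal{A}'$ rather than $\mathcal{A}$ (they are the same set, since expansion rules do not create new concepts, only new assertions about existing ones and new individuals), and to note that the new individual names introduced by the rule are fresh, so extending the assignment $\cdot^{\mathcal{M}'}$ to them as above does not disturb satisfaction of any assertion already in $\mathcal{A}$. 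Once this is set up, every case is a one-line unfolding of a definition or of one of the equations \eqref{eq:completness 1}--\eqref{eq:completeness 4}.
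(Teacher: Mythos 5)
Your proposal is correct and follows essentially the same route as the paper: invoke Lemma~\ref{lem:characteristic consistency} to pass to the enriched model $\mathcal{M}'$ with classifying objects/features and the adjunction individuals, and then check rule by rule that each added assertion is satisfied under the prescribed interpretation of $a_C$, $x_C$, $\Diamondblack b$, $\Diamond b$, $\Box y$, $\blacksquare y$. The paper compresses the case analysis into a single sentence, whereas you spell it out (correctly), so your write-up is simply a more detailed version of the same argument.
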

\begin{proof}
%The proof  shows that if an ABox $\mathcal{A}$ is consistent, then it remains consistent after the application of any expansion rule for it. 
If $\mathcal{A}$ is consistent, by  Lemma \ref{lem:characteristic consistency},  there exists a model $\mathcal{M}'$ of $\mathcal{A}$ which satisfies \eqref{eq:completness 1}, \eqref{eq:completness 2} and \eqref{eq:completness 3}. The theorem follows from the fact that any ABox assertion added by any expansion rule is satisfied by $\mathcal{M}'$, where we interpret $a_C$, $x_C$, $\Diamondblack b$, $\Diamond b$, $\Box y$, $\blacksquare y$ as in Lemma \ref{lem:characteristic consistency}. 
\end{proof}

As a demonstration of the functioning of the tableaux algorithm, we use tableaux Algorithm \ref{alg:main algo} (using unraveling to deal with TBox axioms) to show that  the  knowledge base $\mathcal{K}$   discussed in Section \ref{sec:example of knowledge bases} is consistent and construct a model for it  (cf.~Appendix \ref{app:model for example}),   while  the   knowledge base $\mathcal{K}'$  discussed in Section \ref{sec:example of knowledge bases}  is inconsistent  (cf.~Appendix \ref{App:clash example} for the proof).

\section{Conclusions and future directions}
In this paper, we  define a fuzzy lattice-based non-distributive description logic $\mathrm{LE}$-$\mathcal{FALC}$ as a description logic to describe and reason about fuzzy formal concepts arising from fuzzy (enriched) formal contexts, and define a polynomial-time tableaux algorithm to check the consistency of $\mathrm{LE}$-$\mathcal{FALC}$ ABoxes. Additionally, this algorithm  can be extended to an exponential-time algorithm for checking consistency of knowledge bases with acyclic TBoxes. This work can be extended in several interesting directions, including, but not limited to, the following.

\fakeparagraph{Dealing with cyclic TBoxes and  RBox axioms.} 
In this paper, we introduced a tableaux algorithm only for knowledge bases with acyclic TBoxes. In the future, we intend to generalize the algorithm to deal with cyclic TBoxes as well. Another interesting avenue of research is to  develop  tableaux algorithms for extensions of $\mathrm{LE}$-$\mathcal{FALC}$ with RBox axioms. RBox axioms are used in  description logics to describe the relationship between different relations in  knowledge bases and  the properties of these relations such as reflexivity, symmetry, and transitivity.  It would be interesting to see if it is possible to obtain necessary and/or sufficient conditions on the shape of RBox axioms for which a tableaux algorithm can be obtained. This has an interesting relationship with the problem in LE-logic of providing computationally efficient proof systems for various extensions of LE-logic in a modular manner \cite{greco2016unified,ICLArough}.

\fakeparagraph{Generalizing to more expressive description logics.} The  DL LE-$\mathcal{ALC}$ is the non-distributive counterpart  of  $\mathcal{ALC}$. A natural direction for further research is to explore the non-distributive counterparts of extensions of $\mathcal{ALC}$ such as  $\mathcal{ALCI}$ and $\mathcal{ALCIN}$ and fuzzy generalizations of such description logics. 
This would allow us to express more constructions like (fuzzy) concepts generated by an object or a feature,  which can not be expressed in ($\mathrm{LE}$-$\mathcal{FALC}$) $\mathrm{LE}$-$\mathcal{ALC}$.

 \fakeparagraph{Description logic and Formal Concept Analysis.} 
 The relationship between FCA and DL has been studied and used in several applications \cite{DLandFCA1,DLandFCA2,DLandFCA3}. The framework of LE-$\mathcal{FALC}$ formally brings Fuzzy FCA and DL together, both because its concepts are naturally interpreted as formal concepts in $\mathbf{H}$-valued FCA, and because its language is designed to represent knowledge and reasoning in enriched formal contexts. 
% \textcolor{red}{This logic formally brings FCA and description logic together by providing a description logic whose concepts are interpreted as FCA concepts at the same time providing a description logic to reason with enriched formal contexts.}   
 Thus, these results can help integrate FCA and DL further in both  theory and applications.

%\nocite{*}
\bibliographystyle{eptcs}
\bibliography{generic}

\appendix
\section{Proofs}\label{app:Proofs}

In this appendix, we gather proofs of some results stated throughout the paper.

\subsection{Proof of lemma \ref{lem:Soundness pre}}\label{Proof of lem:Soundness pre}

\begin{proof}
Note that by rule $MV_\vee$, the maximums in the statement of the lemma exist and are unique when the algorithm terminates. The proof is by induction on the complexity of concept $C$. The base case (when $C$ is a primitive concept) is immediate by  construction of the model. For the induction step, we have four cases.

\noindent 1.~Suppose $C=C_1 \wedge C_2$. 

\noindent 1.1.~For the first claim, it is required to prove that
\begin{equation*}
\max \{ \alpha \mid \alpha\leq I(b, x_{C_1 \wedge C_2})\in\overline{\mathcal{A}}\}=\val{C_1 \wedge C_2}(b).
\end{equation*}
\iffalse
By the induction hypothesis, there is 
\begin{equation*}
    \val{C_1 \wedge C_2}(b)=\val{C_1}(b)\wedge\val{ C_2}(b)=\max\{ \alpha \mid \alpha \leq  I (b, x_{C_1}) \}\in  \overline{\mathcal{A}}\}\wedge\max\{ \beta \mid \beta \leq  I (b, x_{C_2})\}\in\overline{\mathcal{A}}.
\end{equation*}
\fi
\iffalse
{{\centering
 $ \val{C_1 \wedge C_2} (b) =  \max \{ \alpha \mid \alpha \leq  I (b, x_{C_1}) \}\in  \overline{\mathcal{A}}\}  \wedge  \max \{ \beta \mid \beta \leq  I (b, x_{C_2})\}\in  \overline{\mathcal{A}}\} 
 $.
 \par }}
 \fi

For direction $(\leq)$, suppose $\alpha_0 \leq b I x_{C_1 \wedge C_2} \in \overline{\mathcal{A}}$ for some $\alpha_0$. Then by using appending and $\wedge_A$ rules, we have $\alpha_0\leq I(b,x_{C_1})\in\overline{\mathcal{A}}$ and $\alpha_0 \leq I(b, x_{C_2})\in \overline{\mathcal{A}}$. Therefore, $\alpha_0\leq\max \{ \alpha \mid \alpha \leq  I (b, x_{C_1}) \}\in  \overline{\mathcal{A}}\}$, and $\alpha_0\leq\max \{ \alpha \mid \alpha \leq  I (b, x_{C_2}) \}\in  \overline{\mathcal{A}}\}$. Hence, by introduction hypothesis, we have $\alpha_0\leq\val{C_1}(b)$ and $\alpha_0\leq\val{C_2}(b)$, which implies $\alpha_0\leq\val{C_1}(b)\wedge\val{C_2}(b)=\val{C_1 \wedge C_2}(b)$. Therefore, we have $\max\{\alpha\mid\alpha\leq I(b, x_{C_1 \wedge C_2})\in\overline{\mathcal{A}}\}\leq\val{C_1 \wedge C_2}(b)$.
%Hence, $\max \{ \alpha \mid \alpha \leq  I (b, x_{C_1 \wedge C_2})\in  \overline{\mathcal{A}}\} \leq \val{C_1 \wedge C_2} (b)$. 

For direction $(\geq)$, suppose $\alpha_0 \leq \val{C_1 \wedge C_2}(b)=\val{C_1}(b)\wedge\val{C_2}(b)$. Therefore, we have $\alpha_0 \leq \val{C_1} (b)$, and $\alpha_0 \leq \val{C_2} (b)$. Hence, by induction hypothesis, there are $\alpha_1, \alpha_2 \in \mathbf{H}$, such that $\alpha_0\leq\alpha_1$, $\alpha_0 \leq \alpha_2$, $\alpha_1 \leq I(b, x_{C_1})$, and $\alpha_2 \leq I(b, x_{C_2}) \in \overline{\mathcal{A}}$. As $C_1 \wedge C_2$ occurs in $\overline{\mathcal{A}}$, by appending and $\wedge_A^{-1}$ rules, we get  $\alpha_1 \wedge \alpha_2 \leq b :C_1 \wedge C_2\in\overline{\mathcal{A}}$. By creation and the basic rules, this implies $\alpha_1 \wedge \alpha_2  \leq b I x_{C_1 \wedge C_2} \in \overline{\mathcal{A}}$. Therefore, we have $\alpha_0 \leq \max \{ \alpha \mid \alpha \leq I (b, x_{C_1 \wedge C_2})\in  \overline{\mathcal{A}}\}$. Hence,  by completeness of $\mathbf{H}$, we have $ \val{C_1 \wedge C_2} (b) \leq \max \{ \alpha \mid \alpha \leq I (b, x_{C_1 \wedge C_2})\in\overline{\mathcal{A}}\}$.

\noindent 1.2.~For the second claim, it is required to prove that
\begin{equation*}
    \descr{C_1 \wedge C_2}(y)=\max \{ \beta \mid \beta \leq  I (a_{C_1 \wedge C_2}, y)\in  \overline{\mathcal{A}}\}.
\end{equation*}

For direction $(\leq)$, notice that $\descr{C_1 \wedge C_2}(y)=\wedge_{b \in A} (\val{C_1\wedge C_2}(b)\rightarrow I(b,y))$. By the proof of the first claim above and the construction of the model, we have $ \descr{C_1 \wedge C_2}(y) =  \wedge_{b \in A} (  \max \{ \alpha \mid \alpha \leq  I (b, x_{C_1 \wedge C_2})\in  \overline{\mathcal{A}}\} \rightarrow \max \{ \beta \mid \beta \leq  I (b, y)\in  \overline{\mathcal{A}}\})$. Suppose $\alpha_0 \leq \descr{C_1 \wedge C_2}(y)$. Therefore, we have $\alpha_0 \leq \max \{ \alpha \mid \alpha \leq  I (b, x_{C_1 \wedge C_2})\in  \overline{\mathcal{A}}\} \rightarrow \max \{ \beta \mid \beta \leq I(b,y)\in\overline{\mathcal{A}}\}$ for every $b\in A$. By the creation rule, we have $1 \leq I(a_{C_1 \wedge C_2}, x_{C_1 \wedge C_2}) \in  \overline{\mathcal{A}}$, which means that $\max\{ \alpha \mid \alpha \leq  I (b, x_{C_1\wedge C_2})\in\overline{\mathcal{A}}\}=1$. Therefore, we get $\alpha_0 \leq  \max \{ \beta \mid \beta \leq  I (a_{C_1 \wedge C_2}, y) \in  \overline{\mathcal{A}}\}$. Hence,
by completeness of  $\mathbf{H}$,
we have $\descr{C_1 \wedge C_2}(y) \leq \max \{ \beta \mid \beta \leq I(a_{C_1 \wedge C_2}, y)\in  \overline{\mathcal{A}}\}$. 

For direction $(\geq)$, suppose $\alpha_0 \leq I (a_{C_1 \wedge C_2}, y)$. By construction of the model, there exists $\alpha_1 \in \mathbf{H}$, such that $\alpha_0 \leq \alpha_1$ and $\alpha_1 \leq I (a_{C_1 \wedge C_2}, y) \in  \overline{\mathcal{A}}$. Suppose $\alpha_2 \leq  I (b, x_{C_1 \wedge C_2}) \in \overline{\mathcal{A}}$ for some $b \in A$, and $\alpha_2 \in \mathbf{H}$, by appending and basic rules, we get $\alpha_1 \wedge \alpha_2 \leq I(b,y) \in  \overline{\mathcal{A}}$. Hence, we have $\alpha_1 \wedge\alpha_2\leq\max\{ \alpha \mid \alpha \leq  I (b, y )\in\overline{\mathcal{A}}\}$ for any $\alpha_2 \leq  I (b, x_{C_1 \wedge C_2})\in\overline{\mathcal{A}}$. Therefore, by properties of $\mathbf{H}$, we have $\alpha_0 \leq \alpha_1 \leq \alpha_2 \rightarrow \alpha_1 \wedge \alpha_2 \leq \max \{ \alpha \mid \alpha \leq  I (b, x_{C_1 \wedge C_2})\in  \overline{\mathcal{A}}\} \rightarrow \max \{ \beta \mid \beta \leq  I (b, y)\in  \overline{\mathcal{A}}\}$. Hence, by completeness of $\mathbf{H}$, we have $ I (a_{C_1 \wedge C_2}, y) \leq \max \{ \alpha \mid \alpha \leq  I (b, x_{C_1 \wedge C_2})\in  \overline{\mathcal{A}}\} \rightarrow \max \{ \beta \mid \beta \leq  I (b, y)\in\overline{\mathcal{A}}\}$ for any $b\in A$. Therefore, we have $\max\{\beta \mid\beta\leq I (a_{C_1 \wedge C_2}, y)\}\leq\wedge_{b \in A} (\val{C_1\wedge C_2}(b)\rightarrow I(b,y))=\descr{C_1 \wedge C_2}(y)$.
%By properties of $\mathbf{H}$, we have $\alpha_0 \leq \alpha_1 \leq \alpha_2  \rightarrow \alpha_1 \wedge \alpha_2 $.  Therefore, for any  $\alpha_2 \leq  I (b, x_{C_1 \wedge C_2}) \in \overline{\mathcal{A}}\in  \overline{\mathcal{A}}$,  we have $\alpha_1 \wedge \alpha_2 \leq I(b,y) \in  \overline{\mathcal{A}}$ which ensures Therefore, we have for any $b \in A$, $\alpha_2  \rightarrow \alpha_1 \wedge \alpha_2 \leq\rightarrow \max \{ \beta \mid \beta \leq  I (b, y)\})$. Therefore, we get $\alpha_0 \leq \descr{C_1 \wedge C_2}(y)$. Hence, $\max \{ \beta \mid \beta \leq  I (a_{C_1 \wedge C_2}, y)\} \leq \descr{C_1 \wedge C_2}(y) $. 

\noindent 2.~The proof for  $C=C_1 \vee C_2$  is similar to the previous one.

\noindent 3.~Suppose $C=[R_\Box] C_1$.

\noindent 3.1.~For the first claim, it is required to prove that 
\begin{equation*}
\val{[R_\Box] C_1}(b)=\max\{ \beta  \mid \beta  \leq I (b, x_{\Box C_1})\in \overline{\mathcal{A}} \}.
\end{equation*}

For direction $(\leq)$, notice that $\val{[R_\Box] C_1}(b)= \bigwedge_{y \in X} (\descr{C_1}(y) \rightarrow R_\Box (b,y))$. By induction and construction of the model, this is equivalent to $\val{[R_\Box] C_1}(b)= \bigwedge_{y \in X} (\max \{ \alpha \mid \alpha \leq  I (a_{C_1},y) \}  \rightarrow \max \{ \beta  \mid \beta  \leq R_\Box (b, y) \in \overline{\mathcal{A}} \})$. Suppose $\alpha_0 \leq \val{[R_\Box] C_1}(b)$. By creation and basic rules, we have $1 \leq I (a_{C_1},x_{C_1}) \in \overline{\mathcal{A}}$. By induction hypothesis claim above, we get $\alpha_0 \leq \max \{ \beta  \mid \beta  \leq  R_\Box (b, x_{C_1}) \in \overline{\mathcal{A}} \}$. By  rule $R_\Box$, this implies $\alpha_0 \leq \max \{ \beta  \mid \beta  \leq  I (b, \Box x_{C_1})=I(b, x_{\Box C_1})\in \overline{\mathcal{A}} \}$.  Therefore, $\val{[R_\Box] C_1}(b) \leq \max \{ \beta\mid \beta\leq I(b, x_{\Box C_1})\in\overline{\mathcal{A}}\}$.  

For direction $(\geq)$, suppose $\alpha_0  \leq I(b, x_{\Box C_1})$. By construction of the model, we have $\alpha_1\leq I(b, x_{\Box C_1})\in\overline{\mathcal{A}}$ for some $\alpha_0\leq\alpha_1\in \mathbf{H}$. Suppose $\alpha_2 \leq  I (a_{C_1},y) \in \overline{\mathcal{A}}$ for some $y$. Then by appending and $\Box$ rules, we get $\alpha_1 \wedge \alpha_2  \leq R_\Box (b, y)$. By properties of $\mathbf{H}$, we have $\alpha_0  \leq   \alpha_1 \leq \alpha_2 \rightarrow \alpha_1 \wedge \alpha_2$. Hence we get $\alpha_0 \leq \max \{ \alpha \mid \alpha \leq  I (a_{C_1},y) \}  \rightarrow \max \{ \beta  \mid \beta  \leq  R_\Box (b, y) \in \overline{\mathcal{A}} \}$ for any $y$. Therefore, $\max \{ \beta  \mid \beta  \leq   I (b, x_{\Box C_1})\in \overline{\mathcal{A}} \} \leq \val{[R_\Box] C_1}(b)$.

\noindent 3.2.~For the second claim, it is required to prove that 
\begin{equation*}
\descr{[R_\Box C_1]}(y)=\max\{ \beta \mid \beta \leq  I (a_{\Box C_1}, y)\in\overline{\mathcal{A}}\}.
\end{equation*}

For direction $(\leq)$, notice that $\descr{[R_\Box] C_1 }(y)=\wedge_{b \in A}(\val{[R_\Box] C_1}(b)\rightarrow I(b,y))$. By the proof of the first claim and construction of the model, we have 
$ \descr{[R_\Box C_1]}(y) =  \wedge_{b \in A} (  \max \{ \alpha \mid \alpha \leq  I (b, x_{\Box C_1})\} \rightarrow \max \{ \beta \mid \beta \leq  I (b, y)\})$. Suppose $\alpha_0 \leq \descr{[R_\Box] C_1}(y)$. Then for every $b$, $\alpha_0 \leq \max \{ \alpha \mid \alpha \leq  I (b, x_{\Box C_1})\} \rightarrow \max \{ \beta \mid \beta \leq  I (b, y)\}$. By applying this to $1 \leq I(a_{\Box C_1}, x_{\Box C_1}) \in  \overline{\mathcal{A}}$ added by creation rule, we get $\alpha_0 \leq  \max \{ \beta \mid \beta \leq  I (a_{\Box C_1}, y)\}$. Therefore, $ \descr{[R_\Box C_1]}(y) \leq \max \{ \beta \mid \beta \leq  I (a_{\Box C_1}, y)\in\overline{\mathcal{A}}\}$.

For direction $(\geq)$, suppose $\alpha_0 \leq I (a_{\Box C_1 }, y)$. By construction of the model, there exists $\alpha_1 \in \mathbf{H}$ such that $\alpha_0 \leq \alpha_1$ and $\alpha_1 \leq I (a_{\Box C_1}, y) \in  \overline{\mathcal{A}}$.  Suppose $\alpha_2 \leq  I (b, x_{\Box C_1}) \in \overline{\mathcal{A}}$ for some $b \in A$, $\alpha_2 \in \mathbf{H}$. Then, by appending and basic rule, we get $\alpha_1 \wedge \alpha_2 \leq I(b,y) \in  \overline{\mathcal{A}}$. Therefore, for   for any $\alpha_2 \leq  I (b, x_{\Box C_1 }) \in \overline{\mathcal{A}}$, we have 
$\alpha_1 \wedge \alpha_2 \leq  \max \{ \alpha \mid \alpha \leq  I (b, y )\} \in  \overline{\mathcal{A}} $.  Hence, we have $\alpha_0 \leq \alpha_1 \leq \alpha_2 \rightarrow \alpha_1 \wedge \alpha_2 \leq \max \{ \alpha \mid \alpha \leq  I (b, x_{\Box C_1})\in  \overline{\mathcal{A}}\} \rightarrow \max \{ \beta \mid \beta \leq  I (b, y)\in  \overline{\mathcal{A}}\}$. Therefore, by the completeness of $\mathbf{H}$, we have $ I (a_{\Box C_1}, y) \leq \max \{\alpha \mid \alpha \leq  I (b, x_{\Box C_1})\in\overline{\mathcal{A}}\} \rightarrow \max \{ \beta \mid \beta \leq  I (b, y)\in  \overline{\mathcal{A}}\}$. Therefore, we have $\max\{ \beta \mid \beta \leq  I (a_{\Box C_1}, y)\in\overline{\mathcal{A}}\}\leq\descr{[R_\Box C_1]}(y)$.

\noindent 4. The proof for  $C=\langle R_\Diamond \rangle C_1$  is similar to the previous one. 

This concludes the proof.
\end{proof}

\section{Model for the first example knowledge base}\label{app:model for example}

In this section, we describe model for the  first knowledge base defined in Section \ref{sec:example of knowledge bases} obtained using unraveling and Tableaux Algorithm \ref{alg:main algo}.

Let $\mathcal{K} = (\mathcal{A}, \mathcal{T})$ be a knowledge base such that  $\mathcal{T} =
\{ C_4  \equiv C_1 \vee  C_3,   C_2 \equiv   C_4 \wedge C_7, C_6 \equiv (C_1 \wedge C_2) \vee (C_2 \wedge C_3) 
\}$,  and $\mathcal{A}=\{1 \leq P_1:C_2, P_1:C_6< 1/2, 1 \leq y_1::[R_\Box] C_1,  1 \leq y_3::[R_\Box] C_3, P_2 R_\Box y_3<1,  1/2\leq P_2:C_1, 1/2 \leq y_1::C_1  \}$.  By unraveling TBox we get the following concept definitions.
\begin{enumerate}
    \item $C_4  \equiv C_1 \vee  C_3$
    \item $C_2 \equiv (C_1 \vee  C_3)   \wedge C_7$
    \item $C_6 \equiv (C_1 \wedge((C_1 \vee  C_3)   \wedge C_7) ) \vee (((C_1 \vee  C_3)   \wedge C_7)\wedge C_3) \equiv (C_1 \wedge C_7) \vee (C_7 \wedge C_3) $.
\end{enumerate}
By substituting these definitions in ABox and running Algorithm \ref{alg:main algo}, we get a  model for $\mathcal{K}$ using construction described in Section \ref{sec:soundness}.  Table \ref{tab:constants} lists symbols we use for different constants of the form $a_C$ or $x_C$ which appear in the model. 

\begin{table}[h]
\centering
 \begin{tabular}{|c|c|c|c|}
\hline
   $a_1$  & $a_{C_1}$ &    $x_1$  & $x_{C_1}$   \\
    $a_2$  & $a_{C_3}$ &    $x_2$  & $x_{C_3}$   \\
    $a_3$  & $a_{C_1 \vee C_3}$ &    $x_3$  & $x_{C_1 \vee C_3}$  \\
 $a_4$  & $a_{(C_1 \vee C_3) \wedge C_7}$ &    $x_4$  & $x_{(C_1 \vee C_3) \wedge C_7}$\\
     $a_5$  & $a_{C_7}$ &    $x_5$  & $x_{C_7}$   \\
  $a_6$  & $a_{C_1 \wedge C_7}$ &    $x_6$  & $x_{C_1 \wedge C_7}$ \\
   $a_7$  & $a_{C_3 \wedge C_7}$ &    $x_7$  & $x_{C_3 \wedge C_7}$ \\
    $a_8$  & $a_{(C_1 \wedge C_7) \vee (C_3 \wedge C_7)}$ &    $x_8$  & $x_{(C_1 \wedge C_7) \vee (C_3 \wedge C_7)}$\\
     $a_{9}$  & $a_{\Box C_1}$ &       $x_{9}$  & $x_{\Box C_1}=\Box x_{C_1}$  \\
        $a_{10}$  & $a_{\Box C_3}$ &       $x_{10}$  & $x_{\Box C_3}=\Box x_{C_3}$  \\
         $a_{11}$  & $a_{\top}$ &    $x_{11}$  & $x_{\bot}$ \\
         \hline
\end{tabular}
\caption{symbols for constants of the form $a_C$ or $x_C$  appearing in the model}
    \label{tab:constants}
\end{table}

 Table \ref{tab:I relation} lists all the object and feature  constant names appearing in the model and value of $I$ relation for every object-feature pair. 

\begin{table}[h]
    \centering
    \begin{tabular}{|c|cccccccccccccccc|}
\hline 
& $x_1$ & $x_2$ & $x_3$ & $x_4$ & $x_5$ & $x_6$ & $x_7$ & $x_8$ & $x_9$ & $x_{10}$ & $x_{11}$ & $y_1$ & $y_3$ & $\Box y_1$&  $\Box y_3$& $\Box x_3$\\
\hline 
 $a_1$ & 1 & 0 &  1 & 0 & 0 & 0& 0& 0& 0& 0& 0 & 1/2 &0 &0 & 0 &0 \\
 $a_2$ & 0 & 1 &  1 & 0 & 0 & 0& 0& 0& 0& 0& 0 & 0 &0 &0 & 0 &0\\
  $a_3$    & 0 & 0 &  1 & 0 & 0 & 0& 0& 0& 0& 0& 0 & 0 &0 &0 & 0  &0  \\
 $a_4$ & 0 & 0 & 0  & 1 & 0 & 0& 0& 0& 0& 0& 0 & 0 &0& 0 & 0 &0 \\
  $a_5$    & 0 & 0 &  1 & 1 & 1 & 0& 0& 0& 0& 0& 0 & 0 & 0& 0 & 0  &0   \\
 $a_6$ & 1 & 0 &  1 & 1 & 1 & 1& 0& 1& 0& 0& 0 & 1/2 & 0 & 0 & 0 &0 \\
  $a_7$    & 0 & 1 &  1 & 1 & 1 & 0& 1& 1& 0& 0& 0 & 0 &0& 0 & 0  &0   \\
 $a_8$ & 0 & 0 &  1 & 1 & 1 & 0& 0& 1& 0& 0& 0 & 0 &0& 0 & 0 &0 \\
  $a_9$    & 0 & 0 &  0 & 0 & 0 & 0& 0& 0& 1& 0& 0 & 1 &1/2& 0 & 0 &1     \\
 $a_{10}$ &  0 & 0 &  0 & 0 & 0 & 0& 0& 0& 0& 1& 0 & 0 &1 & 0 & 0&1\\
 $\Diamondblack a_{9}$    &   1 & 0 &  1 & 0 & 0 & 0& 0& 0& 0& 0& 0 & 1/2 &0& 0 & 0 &0  \\
$\Diamondblack a_{10}$    &  0 & 1 &  1 & 0 & 0 & 0& 0& 0& 0& 0& 0 & 0 & 0 & 0 & 0  &0  \\
  $a_{11}$    &   0 & 0 &  0 & 0 & 0 & 0& 0& 0& 0& 0& 0 & 0 &0& 0 & 0 &0  \\
 $P_1$  & 0 & 0 &  1 & 1 & 1 & 0& 0& 0& 0& 0& 0 & 0 &0& 0 & 0   &0\\
 $P_2$ & 1/2 & 0 &  1/2 & 1 & 1 & 0& 0& 0& 0& 0& 0 & 1/2 &0& 0 & 1/2  &0\\
 $\Diamondblack P_2$  &   0 & 0 &  0 & 0 & 0 & 0& 0& 0& 0& 0& 0 & 0 &0& 0 & 0 &0 \\ 
 \hline
\end{tabular}
      \caption{Objects ($A$) and features ($X$) of model and Relation $I$ between them }
    \label{tab:I relation}
\end{table}

The relation $R_\Box$ is defined by $R_\Box(a_9,x_1)=R_\Box(a_9,x_3)=R_\Box(a_{10},x_{2})=R_\Box(a_{10},x_{3})=0$, $R_\Box(a_9,y_1)=R_\Box(P_1,y_3)= 1/2$, and $R_\Box (b,y)=0$ for any other $b$, $y$. The relation $R_\Diamond$ is defined by $R_\Diamond (y,b)=0$ for any  $b$, $y$. The model contains atomic concepts $C_1$, $C_3$, and $C_7$. For any of these concepts $C$, its interpretation is given by  the tuple $(a_C^\uparrow, x_C^\downarrow)$. It is easy to verify that the relations $R_\Box$, and $R_\Diamond$ are $I$-compatible (In particular, it can be checked  that Lemma \ref{lem:Galois-stability} holds and the model defined above satisfies the knowledge base $\mathcal{K}$. 

\section{Proof of inconsistency of the second example knowledge base}\label{App:clash example}
In this section, we show that the second example of knowledge base given in Section \ref{sec:example of knowledge bases}, i.e.~knowledge base  $\mathcal{K}$  with TBox axiom $C_5 \equiv C_1 \wedge C_3$, and with set of ABox axioms $\{1/2 \leq P_3:[R_\Box] C_1, 1 \leq P_3:[R_\Box] C_3,  P_3 R_\Box y_4 <1/2,  1/2 \leq y_4:: C_5\}$ is inconsistent using tableaux algorithm. 

By unraveling, we replace any occurrence of $C_5$ in ABox with  $C_1 \wedge C_3$. The following table shows terms added to the tableaux expansion of  the resulting ABox. 

{{\centering
\begin{tabular}{ccc}
\hline
    \textbf{Rule} & \textbf{Premises} & \textbf{Added terms}\\
\hline
 create  &   & $1 \leq x_{C_1}::C_1$, $1 \leq x_{C_2}::C_2$\\ 
 $\Box $ & $1/2 \leq P_3:[R_\Box] C_1$, $1 \leq x_{C_1}::C_1$ & $1/2 \leq P_3 R_\Box x_{C_1}$ \\
  $\Box $ & $1 \leq P_3:[R_\Box] C_3$, $1 \leq x_{C_3}::C_3$ & $1 \leq P_3  R_\Box x_{C_3}$\\
  $R_\Box$ &  $1/2 \leq P_3 R_\Box x_{C_1}$, $1 \leq P_3  R_\Box x_{C_3}$ &  $1/2 \leq \Diamondblack P_3  Ix_{C_1}$, $1 \leq \Diamondblack P_3 I  x_{C_3}$\\
  $x_C$ & $1/2 \leq \Diamondblack P_3  Ix_{C_1}$, $1 \leq \Diamondblack P_3 I  x_{C_3}$ & $1/2 \leq  \Diamondblack P_3:C_3$,  $1 \leq  \Diamondblack P_3:C_1$\\
  $\wedge_A$ &  $1/2 \leq  \Diamondblack P_3:C_3$,  $1 \leq  \Diamondblack P_3:C_1$ & $1/2 \leq  \Diamondblack P_3:C_3 \wedge C_1$ \\
  create   & & $x_{C_1 \wedge C_3}:: C_1 \wedge C_3$ \\
  $I$ & $x_{C_1 \wedge C_3}:: C_1 \wedge C_3$, $1/2 \leq  \Diamondblack P_3:C_3 \wedge C_1$ & $1/2 \leq \Diamondblack P_3 I x_{C_1 \wedge C_3}$ \\
  $\Diamondblack b $ & $1/2 \leq \Diamondblack P_3 I x_{C_1 \wedge C_3}$ & $1/2 \leq  P_3 I \Box x_{C_1\wedge C_3}= 1/2 \leq  P_3 I  x_{\Box (C_1 \wedge C_3)}$ \\
  $x_C$ & $ 1/2 \leq  P_3 I  x_{\Box (C_1 \wedge C_3)}$ & $ 1/2 \leq  P_3 : [R_\Box] (C_1 \wedge C_3)$ \\
  unravel & $1/2 \leq y_4:: C_5$ & $1/2 \leq y_4:: C_1 \wedge C_3$ \\
  $\Box$ & $ 1/2 \leq  P_3 : [R_\Box] (C_1 \wedge C_3)$, $1/2 \leq y_4:: C_1 \wedge C_3$ & $1/2 \leq P_3 R_\Box y_4$\\
  \hline
\end{tabular}
\par}}
\smallskip
The term $1/2 \leq  P_3 R_\Box y_4$ clashes with the term  $P_3 R_\Box y_4 < 1/2$ in $\mathcal{A}$. Hence proved.
\end{document}